\newtheorem{theorem}{Theorem}
\newtheorem{lemma}[theorem]{Lemma}
\newtheorem{prop}[theorem]{Proposition}
\theoremstyle{definition}
\newcommand{\tinyspace}{\mspace{1mu}}
\newcommand{\microspace}{\mspace{0.5mu}}
\newcommand{\op}[1]{\operatorname{#1}}
\newcommand{\norm}[1]{\left\lVert\tinyspace#1\tinyspace\right\rVert}
\newcommand{\tr}{\operatorname{Tr}}
\newcommand{\ip}[2]{\left\langle #1 , #2\right\rangle}
\def\({\left(}
\def\){\right)}
\def\I{\mathbb{1}}
\newcommand{\setft}[1]{\mathrm{#1}}
\newcommand{\lin}[1]{\setft{L}\left(#1\right)}
\newcommand{\density}[1]{\setft{D}\left(#1\right)}
\newcommand{\herm}[1]{\setft{Herm}\left(#1\right)}
\newcommand{\pos}[1]{\setft{Pos}\left(#1\right)}
\newcommand{\pd}[1]{\setft{Pd}\left(#1\right)}
\def\complex{\mathbb{C}}
\def\real{\mathbb{R}}
\def\integer{\mathbb{Z}}
\def \lket {\left|}
\def \rket {\right\rangle}
\def \lbra {\left\langle}
\def \rbra {\right|}
\newcommand{\ket}[1]{\lket\microspace #1 \microspace\rket}
\newcommand{\bra}[1]{\lbra\microspace #1 \microspace\rbra}
\newenvironment{mylist}[1]{\begin{list}{}{
	\setlength{\leftmargin}{#1}
	\setlength{\rightmargin}{0mm}
	\setlength{\labelsep}{2mm}
	\setlength{\labelwidth}{8mm}
	\setlength{\itemsep}{0mm}}}
	{\end{list}}
\newcommand{\reg}[1]{\mathsf{#1}}
\def\X{\mathcal{X}}
\def\Y{\mathcal{Y}}
\def\Z{\mathcal{Z}}
\def\W{\mathcal{W}}
\def\E{\mathcal{E}}
\newcommand{\footremember}[2]{%
   \footnote{#2}
    \newcounter{#1}
    \setcounter{#1}{\value{footnote}}%
}
\newcommand{\footrecall}[1]{%
    \footnotemark[\value{#1}]%
} 
\begin{document}
\title{Optimal counterfeiting attacks and generalizations for
  Wiesner's quantum money}

\author{
  Abel Molina,\footremember{1}{%
    Institute for Quantum Computing and School of Computer Science, 
    University of Waterloo.
    Supported by Canada's NSERC, CIFAR, and MITACS.} 
  \quad
  Thomas Vidick,\footnote{%
    Computer Science and Artificial Intelligence Laboratory, 
    Massachusetts Institute of Technology. 
    Supported by the National Science Foundation under Grant No. 0844626.}
  \quad and\quad
  John Watrous\footrecall{1}
}

\maketitle       

\begin{abstract}
  We present an analysis of Wiesner's quantum money scheme, as well as
  some natural generalizations of it, based on semidefinite
  programming.
  For Wiesner's original scheme, it is determined that the optimal
  probability for a counterfeiter to create two copies of a bank note
  from one, where both copies pass the bank's test for validity, is
  $(3/4)^n$ for $n$ being the number of qubits used for each note.
  Generalizations in which other ensembles of states are substituted
  for the one considered by Wiesner are also discussed, including
  a scheme recently proposed by Pastawski, Yao, Jiang, Lukin, and
  Cirac, as well as schemes based on higher dimensional quantum
  systems.
  In addition, we introduce a variant of Wiesner's quantum money in
  which the verification protocol for bank notes involves only
  classical communication with the bank. We show that the optimal
  probability with which a counterfeiter can succeed in two
  independent verification attempts, given access to a single valid
  $n$-qubit bank note, is $(3/4+\sqrt{2}/8)^n$.
  We also analyze extensions of this variant to higher-dimensional
  schemes.
\end{abstract}

\section{Introduction}

Wiesner's protocol for quantum money \cite{Wiesner83} 
was a formative idea
in quantum information processing.
In this protocol, a bank generates a bank note composed of $n$
qubits: each qubit is initialized to a state chosen uniformly at
random from the set $\{ \ket{0}, \ket{1}, \ket{+}, \ket{-}\}$, and
this choice of states is kept secret by the bank.
The bank can later check the authenticity of a given note by
performing a measurement on each of its qubits, in accordance with its
secret record of their original states.
(Each bank note is labeled with a unique serial number, so that all of
the bank notes in circulation may be treated independently.)
The security of Wiesner's scheme rests on the principle that quantum
states cannot be cloned---that is, a malicious attacker, given access
to a fixed supply of authentic bank notes, cannot generate a
\emph{larger} quantity of valid bank notes than those to which he was
initially given access.

Although Wiesner's scheme was introduced almost three decades ago, to
the best of our knowledge no rigorous analysis with explicit
bounds on the security of the scheme exists in the literature.
The intuition that the scheme's security follows from the no-cloning
principle appears in \cite{LoSP98}, and quantitatively one \emph{should}
be able to obtain exponential security guarantees from results
such as proofs of the security of the BB84 quantum key exchange
protocol~\cite{BennettB84,ShorP00,May01} or of
uncloneable encryption~\cite{Gott02}.
In this paper we prove tight bounds on the security of Wiesner's
quantum money scheme, through a simple and easily extended argument
based on semidefinite programming.

We consider the specific situation in which a counterfeiter, given
access to a \emph{single} authentic bank note, attempts to create
\emph{two} bank notes having the same serial number that independently
pass the bank's test for validity.
We will call such attacks \emph{simple counterfeiting attacks}. 
Our first main result is the following. 

\begin{theorem}\label{thm:main-quant}
  The optimal simple counterfeiting attack against
  Wiesner's quantum money scheme has success probability exactly
  $(3/4)^n$, where $n$ is the number of qubits in each bank
  note.\footnote{%
    Wiesner~\cite{Wiesner83} in fact arrived at a similar bound, but
    through a not-so-rigorous argument!}
\end{theorem}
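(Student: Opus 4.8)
The plan is to express the optimal success probability of a simple counterfeiting attack as the optimal value of a semidefinite program, to sandwich that value between matching primal and dual feasible solutions, and to exploit the tensor-product structure so that everything reduces to the one-qubit case.

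An attack is a channel $\Phi$ taking the $n$ qubits of the authentic note to the $2n$ qubits of the two forged notes. Since the bank verifies a note by performing commuting binary measurements, one per qubit, the probability that both forged copies pass, conditioned on the note having been prepared in $\ket{\psi}=\ket{\psi_1}\otimes\cdots\otimes\ket{\psi_n}$, is $\bra{\psi}^{\otimes 2}\,\Phi\bigl(\ket{\psi}\bra{\psi}\bigr)\,\ket{\psi}^{\otimes 2}$. Averaging over the $4^n$ equiprobable choices of $(\psi_1,\dots,\psi_n)$ and writing the result as a linear functional $\tr\bigl(Q_n\,J(\Phi)\bigr)$ of the Choi representation $J(\Phi)$, with (the four states $\ket 0,\ket 1,\ket +,\ket -$ being real)
\[
  Q_n \;=\; \frac{1}{4^n}\sum_{\psi}\bigl(\ket{\psi}\bra{\psi}\bigr)^{\otimes 3},
\]
the optimal success probability is the value of the semidefinite program: maximize $\tr(Q_n J)$ over positive semidefinite $J$ whose partial trace over the two output registers equals the identity on the input register. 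The structural point is that, after permuting the $3n$ qubit registers so that the three copies of each of the $n$ coordinates sit together, $Q_n$ is the tensor power $Q_1^{\otimes n}$ of the three-qubit operator $Q_1=\tfrac14\sum_{\psi\in\{0,1,+,-\}}\bigl(\ket{\psi}\bra{\psi}\bigr)^{\otimes 3}$; and, in the other direction, a coordinatewise product of $n$ copies of any one-qubit attack is itself a feasible $n$-qubit attack whose success probability is the $n$-th power of the one-qubit value. So it is enough to prove the one-qubit bound $3/4$ and raise to the $n$-th power on both sides.

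The dual program is $\min\{\tr(Y):Y\ \text{Hermitian},\ Y\otimes I\succeq Q_n\}$, and weak duality gives $\tr(Q_n J)\le\tr(Y)$ for every primal-feasible $J$ and dual-feasible $Y$. Taking $Y=(3/8)^{n}I$ makes the dual constraint equivalent to $\norm{Q_n}\le(3/8)^{n}$, which holds because $\norm{Q_n}=\norm{Q_1}^{n}$ and (as shown below) $\norm{Q_1}=3/8$; its objective value is $(3/8)^{n}2^{n}=(3/4)^{n}$. To compute $\norm{Q_1}$ I would expand $Q_1$ in the Pauli basis: from $\bigl(\ket{0}\bra{0}\bigr)^{\otimes 3}+\bigl(\ket{1}\bra{1}\bigr)^{\otimes 3}=\tfrac14(I+Z_1Z_2+Z_1Z_3+Z_2Z_3)$ and its $X$-analogue, $16\,Q_1=2I+\sum_{i<j}(X_iX_j+Z_iZ_j)$; and the elementary identity $\sum_{i<j}(X_iX_j+Z_iZ_j)=\tfrac12\bigl((X_1+X_2+X_3)^2+(Z_1+Z_2+Z_3)^2\bigr)-3I$ lets one write $16\,Q_1=\tfrac12\bigl(S_{\mathrm{tot}}^{2}-S_Y^{2}\bigr)-I$, where $S_{\mathrm{tot}}^{2}=(X_1+X_2+X_3)^2+(Y_1+Y_2+Y_3)^2+(Z_1+Z_2+Z_3)^2$ and $S_Y=Y_1+Y_2+Y_3$ refer to the total angular momentum of three spin-$\tfrac12$ systems (in the normalization where a single qubit contributes $\sigma_a$). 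Because $S_{\mathrm{tot}}^{2}$ equals $15$ on the four-dimensional symmetric (spin-$\tfrac32$) subspace and $3$ on its complement, while $S_Y^{2}\in\{1,9\}$ on the former and $S_Y^{2}=1$ on the latter, the spectrum of $16\,Q_1$ is $\{6,6,2,2,0,0,0,0\}$; hence $\norm{Q_1}=3/8$, attained on the two-dimensional space of symmetric three-qubit states with $S_Y=\pm1$. I expect this eigenvalue computation to be the crux: a bound obtained by estimating $\sum_{i<j}(X_iX_j+Z_iZ_j)$ term by term is far too weak, and the right constant only emerges once the three-body operator is recognized as (half the) total spin.

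For the matching lower bound, complementary slackness forces the optimal one-qubit Choi matrix to be supported on the top eigenspace of $Q_1$, and I would simply display such a matrix. With $\ket{{\pm}y}=\tfrac{1}{\sqrt2}(\ket{0}\pm i\ket{1})$, let $\ket{\xi_+}=\tfrac{1}{\sqrt3}\bigl(\ket{{-}y}\ket{{+}y}\ket{{+}y}+\ket{{+}y}\ket{{-}y}\ket{{+}y}+\ket{{+}y}\ket{{+}y}\ket{{-}y}\bigr)$ be the symmetric three-qubit state with $S_Y=+1$, let $\ket{\xi_-}$ be obtained from it by interchanging $\ket{{+}y}$ and $\ket{{-}y}$, and set $J=\ket{\xi_+}\bra{\xi_+}+\ket{\xi_-}\bra{\xi_-}$. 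A short computation shows that the partial trace of $\ket{\xi_+}\bra{\xi_+}$ over its last two qubits is $\tfrac13\ket{{-}y}\bra{{-}y}+\tfrac23\ket{{+}y}\bra{{+}y}$, with the $y$-flipped statement for $\ket{\xi_-}$, so the two partial traces sum to the identity and $J$ is a legitimate Choi matrix of a channel; and, since $J$ is supported on the top eigenspace, $\tr(Q_1J)=\tfrac38\tr(J)=\tfrac34$. Taking the coordinatewise tensor power of the corresponding one-qubit channel yields an $n$-qubit attack succeeding with probability exactly $(3/4)^{n}$, which together with the dual bound proves the theorem; since explicitly matching primal and dual feasible points have been exhibited, strong duality need not be invoked.
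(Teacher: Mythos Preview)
Your argument is correct and follows the paper's overall route: cast the optimal counterfeiting probability as an SDP, exhibit matching primal and dual feasible solutions achieving $3/4$ for a single qubit, and tensor them to obtain $(3/4)^n$. The paper simply asserts $\norm{Q}=3/8$ for the dual and takes as primal an explicit cloning channel with Kraus operators $A_0,A_1$ drawn from the optimal phase-covariant cloning literature, whereas you compute $\norm{Q_1}$ by expanding in Paulis and recognising $16\,Q_1+I=\tfrac12\bigl(S_{\mathrm{tot}}^{2}-S_Y^{2}\bigr)$, and then derive the primal from the top eigenspace via complementary slackness. These two primals are in fact the \emph{same} channel: the paper's $A_0,A_1$ vectorise to fully symmetric three-qubit states orthogonal to $\ket{{+}y}^{\otimes 3}$ and $\ket{{-}y}^{\otimes 3}$, so their Choi matrix, like your $\ket{\xi_+}\bra{\xi_+}+\ket{\xi_-}\bra{\xi_-}$, is precisely the rank-two projector onto the symmetric $S_Y=\pm 1$ eigenspace. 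Your spin computation is a pleasant self-contained substitute for the paper's bare assertion of $\norm{Q}$, and your use of symmetry of $\ket{\xi_\pm}$ to verify the partial-trace constraint (it does not matter which two qubits are traced out) is a clean touch.
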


\noindent
Other types of attacks are not analyzed in this paper, but we must
note their existence!
For instance, a counterfeiter may attempt to create or copy bank notes
through multiple interactions with the bank.
One simple example of such an attack does not require counterfeiters
to possess any bank notes to start with: by substituting one of two
qubits of a Bell state for each qubit of a bank note, a counterfeiter
can succeed in passing the bank's test for validity with probability
$2^{-n}$, and then conditioned on having succeeded the counterfeiter
will be guaranteed to hold a second valid bank note.\footnote{%
  Lutomirski \cite{Lutomirski10} considered a related scenario where
  the bank kindly provides counterfeiters with access to a bank note's
  post-measurement qubits, regardless of whether validity was
  established. He proved that $O(n)$ verification attempts are
  sufficient to break the protocol in this setting.}
One would therefore expect that the bank would charge a small fee
for testing validity, or 
perhaps alert the authorities when an individual repeatedly makes
failed attempts to validate bank notes,
 for otherwise counterfeiters have a very small but
positive incentive to attack the protocol.
Generally speaking, an analysis of attacks of this nature would seem
to require a limit on the number of verification attempts permitted,
or the specification of a utility function that weighs the potential
gain from counterfeiting against the costs for multiple verifications.
We expect that the semidefinite programming method used to prove
Theorem~\ref{thm:main-quant} would be useful for analyzing such
attacks, but we leave this as a problem for interested readers to
consider.

We also consider simple counterfeiting strategies against quantum
money schemes that generalize Wiesner's original scheme.
These are the schemes obtained by varying the set of possible states that
a quantum bank note may store, as well as the underlying probabilities
for those states. We show that there is a scheme based on the repetition
of a $4$-state single-qubit scheme (i.e., having the same structure as
Wiesner's) for which the optimal simple counterfeiting attack has
success probability $(2/3)^n$, which is optimal among all schemes of
that form. 
Furthermore, we show that any money scheme based on the
use of $d$-dimensional bank notes is subject to a simple
counterfeiting attack with success probability at least $2/(d+1)$, and
we describe a scheme for which this is the best one can do. 

\medskip

One drawback of Wiesner's money scheme is that, not only does it
involve communicating with a centralized bank in order to establish
the authenticity of a given bank note,\footnote{%
  There has also been work in recent years on creating quantum money
  schemes that do not require any communication with the bank in order
  to verify a bank note, but this is only possible under computational
  assumptions~\cite{FarhiGHLS10,LAFGHKS09,Aar09}.}
but it also requires \emph{quantum} communication: bills have to be
sent to the bank for verification.
Gavinsky~\cite{Gav11} recently introduced an alternative scheme in
which bills can be authenticated using only \emph{classical}
communication with the bank.

We consider the following procedure for classical verification of an
$n$-qubit bank note, constructed as in Wiesner's scheme. 
The bank sends the user a random challenge $c\in\{0,1\}^n$.
An honest user should measure the $i$-th qubit in the computational
basis if $c_i=0$, or in the Hadamard basis if $c_i=1$, and send the
measurement outcomes $b\in\{0,1\}^n$ to the bank. 
The bank validates the bank note if and only if whenever $c_i$
corresponded to the basis in which qubit $i$ was encoded, $b_i$
describes the correct outcome.
(A similar scheme was independently introduced recently
in~\cite{PYJLC11}.) 
In this setting, a \emph{simple counterfeiting attack} is one in which a
counterfeiter tries to succeed in \emph{two} independent
authentications with the bank, given access to a single valid bank
note.
Our second main result is the following.

\begin{theorem}\label{thm:main-class}
  For the classical-verification analogue of Wiesner's quantum money
  scheme, the optimal simple counterfeiting attack has success
  probability exactly  $\big(3/4+\sqrt{2}/8\big)^n$, for $n$ being the
  number of qubits in each bank note. 
\end{theorem}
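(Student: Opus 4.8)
The plan is to reproduce the semidefinite‑programming argument behind Theorem~\ref{thm:main-quant}, now with the classical challenges folded in. First fix the attack model: the counterfeiter holds the $n$‑qubit note, receives both challenge strings $c^{(1)},c^{(2)}\in\{0,1\}^n$, and must return strings $b^{(1)},b^{(2)}$. By a standard measure‑and‑prepare argument, reporting a bit to a verifier holding challenge bit $c$ is equivalent — from the verifier's point of view — to handing that verifier a qubit which it then measures in the basis $c$ prescribes; so the attack is equivalent to the following: for each fixed pair $(c^{(1)},c^{(2)})$, apply a channel $\Phi_{c^{(1)}c^{(2)}}$ sending the note to two $n$‑qubit registers, register $j$ of which verifier $j$ measures honestly in basis $c^{(j)}$. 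The success probability becomes $16^{-n}\sum_{\theta,k,c^{(1)},c^{(2)}}\langle V_1\otimes V_2,\,\Phi_{c^{(1)}c^{(2)}}(\rho_{\theta k})\rangle$ (all four strings ranging uniformly over $\{0,1\}^n$), where $\rho_{\theta k}$ is the note's state and $V_j$ is the tensor product of the projectors $\ket{k_\theta}\bra{k_\theta}$ on the positions where $c^{(j)}$ agrees with the encoding basis $\theta$, with identities elsewhere.

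Because the channel may be chosen per challenge pair, the maximum splits as a sum of $4^n$ independent semidefinite programs, one for each $(c^{(1)},c^{(2)})$, and each of these is a tensor product over the $n$ positions of one of just four single‑qubit semidefinite programs, indexed by $(c^{(1)}_i,c^{(2)}_i)\in\{0,1\}^2$. The usual tensor‑product behaviour of semidefinite programs then applies — feasible primal solutions tensor, and so do feasible dual solutions provided they are positive semidefinite — so the whole problem reduces to solving the four single‑qubit programs and multiplying. (Verifying that the relevant optimal dual solutions are indeed positive semidefinite, so that the upper bound tensorizes, is a routine but necessary point.)

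The two cases $(0,0)$ and $(1,1)$ are trivial: when both verifiers demand the same basis, the counterfeiter measures the qubit in that basis and reports the outcome to both, which passes with certainty whenever the note was actually encoded in that basis, so each of these programs attains its maximum possible value (equal to $4$ in the normalization above). The cases $(0,1)$ and $(1,0)$ are equivalent under a Hadamard symmetry and carry all the content: one verifier wants a computational‑basis answer and the other a Hadamard‑basis answer, an optimal‑cloning‑type program. Passing to its Choi‑matrix form, the objective operator is $G=\Pi_1+\Pi_2$, the sum of two rank‑four orthogonal projectors — onto ``input agrees with register~$1$ in the computational basis'' and ``input agrees with register~$2$ in the Hadamard basis.'' The crux is the spectral identity $\Pi_1\Pi_2\Pi_1=\tfrac12\Pi_1$, which gives $\snorm{\Pi_1\Pi_2}=1/\sqrt2$ and hence $\lambda_{\max}(G)=1+1/\sqrt2$; the (positive!) dual solution $Y=(1+1/\sqrt2)I$ then yields the upper bound $\tr Y=2+\sqrt2$. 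A matching attack is to measure the qubit in the Breidbart basis (the basis at angle $\pi/8$) and, on outcome $j$, prepare $\ket{j}$ for verifier~$1$ and $\ket{j_H}$ for verifier~$2$; a short computation shows this passes with probability $\cos^2(\pi/8)$ in each branch where the note's basis is being tested, for total value $4\cos^2(\pi/8)=2+\sqrt2$.

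Assembling the four single‑qubit values gives the single‑qubit optimum $\tfrac1{16}\big(4+4+(2+\sqrt2)+(2+\sqrt2)\big)=\tfrac34+\tfrac{\sqrt2}{8}$, and the tensor‑product structure lifts this to $\big(\tfrac34+\tfrac{\sqrt2}{8}\big)^{n}$. I expect the main obstacle to be the $(0,1)$ single‑qubit case — one must both find the right attack (the Breidbart‑basis strategy, which is where $\sqrt2$ enters) and certify its optimality via the spectral computation $\lambda_{\max}(\Pi_1+\Pi_2)=1+1/\sqrt2$ — with a secondary obstacle in pinning down the model so that the decomposition into independent per‑challenge‑pair programs is legitimate (in particular, that the counterfeiter may condition on both challenges, and that classical messages reduce without loss of generality to honestly measured qubits).
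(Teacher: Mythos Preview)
Your proposal is correct and follows essentially the same route as the paper: split by challenge pair, handle the matched-basis cases trivially, and for the mismatched case bound the SDP by the largest eigenvalue of a sum of two projections (the paper block-diagonalizes to the rank-$2$ operators $V_{s,t}=\ket{e_s^0}\bra{e_s^0}+\ket{e_t^1}\bra{e_t^1}$ with eigenvalues $1\pm\lvert\langle e_s^0|e_t^1\rangle\rvert$, which is the same computation as your identity $\Pi_1\Pi_2\Pi_1=\tfrac12\Pi_1$), with the Breidbart-basis attack giving the matching primal and the Mittal--Szegedy product property handling the $n$-fold repetition. The only cosmetic difference is that you first convert classical answers into honestly-measured qubits, whereas the paper works directly with the classical-answer POVMs $\{A_{c_1c_2}^{a_1a_2}\}$.
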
 

\noindent
As for Theorem~\ref{thm:main-quant}, our proof of Theorem~\ref{thm:main-class}
follows from the use of semidefinite programming techniques.
In addition we show that, contrary to the quantum-verification
setting, Wiesner's scheme is optimal as long as one considers only
qubits: either changing the bases used to encode each qubit or
increasing the number of possible bases will not improve the scheme's
security against simple counterfeiting attacks.
We also consider a natural generalization of this scheme to bank notes
made of $d$-dimensional qudits, and prove that the optimal simple
counterfeiting attack against it has success probability
exactly $(3/4+1/(4\sqrt{d}))^n$. 

\paragraph{Related work.}
The no-cloning theorem~\cite{WoottersZ82} states that there is no perfect
quantum cloning machine. 
This impossibility result relies on two assumptions: that we are
trying to clone \emph{all} possible states (of a given dimension), and
that we are trying to do so \emph{perfectly}.
Relaxing either or both assumptions opens the way for a fruitful
exploration of the possibility of \emph{approximate} cloning
machines. 
Most work in this area focuses on obtaining \emph{universal}
cloners---required to work for all possible input states---but that
may not be perfect. 

To quantify the quality of a cloner one has to settle on a
\emph{figure of merit}.
Two main figures have been considered: the minimum 
(or, alternately, the average) overlap between one of the two output
clones with the input state, or the joint overlap of both output clones
with a tensor product of the input state with itself.\footnote{%
  In both cases, the specific distance measure used can also be
  varied.
  For instance, the trace distance and the Hilbert-Schmidt distance on
  density matrices have been considered.} 
Bu\v{z}ek and Hillery~\cite{BuzekH96} determined the optimal universal
qubit cloner in the first case, while Werner~\cite{Werner98} solved
the general problem with respect to the second figure of merit.

In the setting of quantum money, however, the first assumption is also
relaxed: a counterfeiter only needs to be successful in cloning the
specific states that are used to create the bank notes.
Work in this direction includes that of Bru\ss\ et
al.~\cite{BruCDM00}, who determined the optimal cloner for
the states used in Wiesner's original money scheme, and for the first
figure of merit discussed above. 
While in this work we consider the second figure of merit, which is
the one appropriate to the context of quantum money, our results can
easily be extended to the first. 

We use a semidefinite programming formulation of the problem, in which one can numerically determine the success probability of an optimal cloner, given any desired possible set of input states and underlying distribution. The connection between cloning of quantum states and semidefinite
programming was observed by Audenaert and De~Moor~\cite{AudenaertD02},
and has been used in the study of cloning by other researchers.
(See, for instance, the survey of Cerf and 
Fiur\'{a}\v{s}ek \cite{CerfF06}.) 
The formulation that we use is closely related to one used in
\cite{MolinaW11}, and can also be seen as a special case of a
semidefinite programming framework for more general quantum strategies
developed in~\cite{GutoskiW07}.

Recent work of Pastawski et al.~\cite{PYJLC11} contains an
analysis of a $6$-state variant of Wiesner's money scheme, obtaining a
tight bound of $(2/3)^n$ on optimal simple counterfeiting attacks.
In addition, they show that the scheme can be made
error-tolerant---the bank will accept a bank note as long as say
$99\%$ of the qubit measurements are correct, allowing for the money
state to be slightly perturbed and still undergo a successful
authentication.\footnote{Our analysis can also be extended to this setting;
see Section~\ref{sec:par-rep} for more details.}
They also consider a classical-verification variant of the scheme
that is similar to (but somewhat less efficient than) the one we
propose, obtaining exponential security guarantees.

Other works consider more general counterfeiting attacks than we do,
and develop techniques that may be useful to extend our own results.  
In particular, Aaronson and Christiano~\cite{AC12} reduce security
against general $m\mapsto m+1$ cloners (given $m$ copies of a bank note,
produce $m+1$ quantum states that will be simultaneously accepted by the bank's 
verification procedure) to security against simple counterfeiting attacks of
the type we consider (attackers on their ``mini-schemes''). 
Pastawski et al.~\cite{PYJLC11} show that 
auxiliary access to the bank's verification procedure does not help,
\emph{provided} the only information returned by the bank is a single
bit, indicating success or failure. 
Indeed, intuitively this situation may be reduced to one in which the
cloner has no access to such a verification oracle simply by guessing:
because most attempts in verification will result in failure (otherwise
we would already have a successful cloner), the bits returned do not
contain much information.

\paragraph{Organization of the paper.}
We start with some preliminaries on quantum information theory and
semidefinite programming in Section~\ref{sec:preliminaries}.
Section~\ref{sec:generalized-schemes} contains our results on
Wiesner's quantum money scheme and generalizations, while
Section~\ref{sec:classical-ver} describes our results on schemes with
classical verification procedure.

\section{Preliminaries}
\label{sec:preliminaries}

We assume the reader is familiar with the basics of quantum
information theory, and suggest Nielsen and Chuang~\cite{NielsenC00}
to those who are not.
The purpose of this section is to summarize some of the notation and
basic concepts we make use of, and to highlight a couple of concepts
that may be less familiar to some readers.
The lecture notes \cite{WatrousNotes} may be helpful to readers
interested in further details on these topics.

\subsection{Basic notation, states, measurements and channels}

For any finite-dimensional complex Hilbert space $\X$ we write
$\lin{\X}$ to denote the set of linear operators acting on $\X$, 
 $\herm{\X}$ to denote the set of Hermitian operators acting
on $\X$, $\pos{\X}$ to denote the set of positive
semidefinite operators acting on $\X$, $\pd{\X}$ to denote
the set of positive definite operators acting on $\X$, and 
$\density{\X}$ to denote the set of density operators acting on $\X$.
For Hermitian operators $A,B\in\herm{\X}$ the notations $A\geq B$ and
$B\leq A$ indicate that $A - B$ is positive semidefinite, and the
notations $A > B$ and $B < A$ indicate that $A - B$ is positive definite.

Given operators $A,B\in\lin{\X}$, one defines the inner product
between $A$ and $B$ as $\ip{A}{B} = \tr(A^{\ast}B)$.
For Hermitian operators $A,B\in\herm{\X}$ it holds that
$\ip{A}{B}$ is a real number and satisfies $\ip{A}{B} = \ip{B}{A}$.
For every choice of finite-dimensional complex Hilbert spaces $\X$ and
$\Y$, and for a given linear mapping of the form
$\Phi:\lin{\X}\rightarrow\lin{\Y}$, there is a unique mapping
$\Phi^{\ast}:\lin{\Y}\rightarrow\lin{\X}$ (known as the \emph{adjoint}
of $\Phi$) that satisfies
$\ip{Y}{\Phi(X)} = \ip{\Phi^{\ast}(Y)}{X}$ for all $X\in\lin{\X}$ and
$Y\in\lin{\Y}$.

A \emph{register} is a hypothetical device that stores quantum
information.
Associated with a register $\reg{X}$ is a finite-dimensional complex
Hilbert space $\X$, and each quantum state of $\reg{X}$ is described
by a density operator $\rho\in\density{\X}$.
\emph{Qubits} are registers for which $\dim(\X) = 2$.
A \emph{measurement} of $\reg{X}$ is described by a set of positive
semidefinite operators $\{P_a\,:\,a\in\Sigma\}\subset\pos{\X}$,
indexed by a finite non-empty set of measurement outcomes $\Sigma$,
and satisfying the constraint $\sum_{a\in\Sigma}P_a = \I_{\X}$ (the
identity operator on $\X$).
If such a measurement is performed on $\reg{X}$ while it is in the
state $\rho$, each outcome $a\in\Sigma$ is obtained with probability
$\ip{P_a}{\rho}$.
A \emph{quantum channel} is a completely positive and trace-preserving
linear mapping of the form \mbox{$\Phi:\lin{\X}\rightarrow\lin{\Y}$} that
describes a hypothetical physical process that transforms each state
$\rho$ of a register $\reg{X}$ into the state $\Phi(\rho)$ of another
register $\reg{Y}$.
The identity channel that does nothing to a register $\reg{X}$ is
denoted $\I_{\lin{\X}}$.

\subsection{Linear mappings on spaces of operators}

Suppose $\op{dim}(\X) = d$ and assume that a fixed orthonormal basis
$\{\ket{1},\ldots,\ket{d}\}$ of $\X$ has been selected.
With respect to this basis, one defines the Choi-Jamio{\l}kowski
operator $J(\Phi)\in\lin{\Y\otimes\X}$ of a linear mapping
$\Phi:\lin{\X}\rightarrow\lin{\Y}$ as
\[
J(\Phi) = \sum_{1\leq i,j \leq d}
\Phi(\ket{i}\bra{j}) \otimes \ket{i}\bra{j}.
\]
The mapping $J$ is a linear bijection from the space of mappings of
the form $\Phi:\lin{\X}\rightarrow\lin{\Y}$ to 
$\lin{\Y\otimes\X}$.
It is well-known that $\Phi$ is completely positive if and only if
$J(\Phi) \in \pos{\Y\otimes\X}$, and that $\Phi$ is trace-preserving
if and only if $\tr_{\Y}(J(\Phi)) = \I_{\X}$
\cite{Choi75,Jamiolkowski72}. It is also well-known, and easy to verify, that
\begin{equation}\label{eq:jphi}
\bra{\phi} \Phi(\ket{\psi}\bra{\psi}) \ket{\phi}
= \bra{\phi \otimes \overline{\psi}} J(\Phi)\ket{\phi \otimes
  \overline{\psi}}
\end{equation}
for any choice of vectors $\ket{\psi}\in\X$ and
$\ket{\phi}\in\Y$, with
complex conjugation taken with respect to the standard basis.

\subsection{Semidefinite programming}

Semidefinite programming is a topic that has found several interesting
applications within quantum computing and quantum information theory
in recent years.
Here, we provide just a brief summary of semidefinite programming that
is focused on the narrow aspects of it that we use.
More comprehensive discussions can be found in
\cite{VandenbergheB96,Lovasz03,deKlerk02,BoydV04}, for instance.

A semidefinite program is a triple $(\Phi,A,B)$, where
\begin{mylist}{\parindent}
\item[1.] 
$\Phi: \lin{\X} \rightarrow \lin{\Y}$ is a Hermiticity-preserving
  linear mapping, and
\item[2.] $A\in\herm{\X}$ and $B\in\herm{\Y}$ are Hermitian operators,
\end{mylist}
for some choice of finite-dimensional complex Hilbert spaces $\X$ and $\Y$.
We associate with the triple $(\Phi,A,B)$ two optimization problems,
called the \emph{primal} and \emph{dual} problems, as follows:
\begin{center}
  \begin{minipage}{2.6in}
    \centerline{\underline{Primal problem}}\vspace{-7mm}
    \begin{align*}
      \text{maximize:}\quad & \ip{A}{X}\\
      \text{subject to:}\quad & \Phi(X) = B,\\
      & X\in\pos{\X}.
    \end{align*}
  \end{minipage}
  \hspace*{13mm}
  \begin{minipage}{2.6in}
    \centerline{\underline{Dual problem}}\vspace{-7mm}
    \begin{align*}
      \text{minimize:}\quad & \ip{B}{Y}\\
      \text{subject to:}\quad & \Phi^{\ast}(Y) \geq A,\\
      & Y\in\herm{\Y}.
    \end{align*}
  \end{minipage}
\end{center}
\noindent
The optimal primal value of this semidefinite program is
\[
\alpha = \sup\{\ip{A}{X}\,:\,X\in\pos{\X},\,\Phi(X) = B\},
\]
and the optimal dual value is
\[
\beta = \inf\{\ip{B}{Y}\,:\,Y\in\herm{\Y},\,\Phi^{\ast}(Y) \geq A\}.
\]
(It is to be understood that the supremum over an empty set is
$-\infty$ and the infimum over an empty set is $\infty$, so $\alpha$
and $\beta$ are well-defined values in 
$\real\cup\{-\infty,\infty\}$.
In this paper, however, we will only consider semidefinite programs for
which $\alpha$ and $\beta$ are finite.)

It always holds that $\alpha \leq \beta$, which is a fact known as
\emph{weak duality}.
The condition $\alpha = \beta$, which is known as 
\emph{strong duality}, does not hold for every semidefinite program,
but there are simple conditions known under which it does hold.
The following theorem provides one such condition (that has both a
primal and dual form).

\begin{theorem}[Slater's theorem for semidefinite programs]
  \label{theorem:Slater}
Let $(\Phi,A,B)$ be a semidefinite program and let $\alpha$ and
$\beta$ be its optimal primal and dual values.
\begin{mylist}{\parindent}
\item[1.]
  If $\beta$ is finite and there exists a positive definite operator
  $X\in\pd{\X}$ for which $\Phi(X) = B$,
  then $\alpha = \beta$ and there exists an operator $Y\in\herm{\Y}$
  such that $\Phi^{\ast}(Y)\geq A$ and $\ip{B}{Y} = \beta$.
\item[2.]
  If $\alpha$ is finite and there exists a Hermitian operator
  $Y\in\herm{\Y}$ for which $\Phi^{\ast}(Y) > A$,
  then $\alpha = \beta$ and there exists a positive semidefinite 
  operator $X\in\pos{\X}$ such that $\Phi(X)=B$ and 
  $\ip{A}{X} = \alpha$.
\end{mylist}
\end{theorem}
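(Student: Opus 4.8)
The plan is to prove strong duality by a geometric (separating-hyperplane) argument carried out in the finite-dimensional real inner-product space $\herm{\Y}\oplus\real$, establishing statement~1 in detail and obtaining statement~2 by the mirror-image argument. I focus on statement~1: assume $\beta$ is finite and that there is $X_0\in\pd{\X}$ with $\Phi(X_0)=B$. The primal is then feasible, so $\alpha\geq\ip{A}{X_0}>-\infty$, and weak duality gives $\alpha\leq\beta<\infty$; hence $\alpha$ is a finite real number. It suffices to produce a dual-feasible $Y'$ — that is, $Y'\in\herm{\Y}$ with $\Phi^{\ast}(Y')\geq A$ — satisfying $\ip{B}{Y'}\leq\alpha$, since weak duality then forces $\alpha\leq\beta\leq\ip{B}{Y'}\leq\alpha$, giving $\alpha=\beta$ with the dual optimum attained at $Y'$.

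To manufacture such a $Y'$, I would consider the two nonempty convex subsets of $\herm{\Y}\oplus\real$,
\[
\M=\bigl\{(\Phi(X)-B,\ \ip{A}{X}):X\in\pos{\X}\bigr\},
\qquad
\C=\{0\}\times(\alpha,\infty),
\]
which are disjoint: a common point would be an $X\in\pos{\X}$ with $\Phi(X)=B$ and $\ip{A}{X}>\alpha$, contradicting the definition of $\alpha$ as the primal supremum. Since two disjoint nonempty convex sets in a finite-dimensional real vector space can be properly separated, there is a nonzero functional, represented via the inner product as $(Y,\lambda)\in\herm{\Y}\oplus\real$, and a constant $c$ with
\[
\ip{Y}{\Phi(X)-B}+\lambda\ip{A}{X}\;\leq\;c\;\leq\;\lambda s
\]
for all $X\in\pos{\X}$ and all $s>\alpha$. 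The right inequality forces $\lambda\geq 0$ and $c\leq\lambda\alpha$. Rewriting the left inequality through the adjoint as $\ip{\Phi^{\ast}(Y)+\lambda A}{X}\leq c+\ip{Y}{B}$ for all $X\in\pos{\X}$, and using that a linear functional bounded above on the cone $\pos{\X}$ must be nonpositive on it, I obtain $\Phi^{\ast}(Y)+\lambda A\leq 0$, i.e. $-\Phi^{\ast}(Y)\geq\lambda A$, together with $\ip{Y}{B}\geq -c$.

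If $\lambda>0$ the argument finishes immediately: setting $Y'=-Y/\lambda$ gives $\Phi^{\ast}(Y')\geq A$ and $\ip{B}{Y'}=-\ip{B}{Y}/\lambda\leq c/\lambda\leq\alpha$, which is exactly the dual witness sought. The crux — and the only place the strict-feasibility hypothesis enters — is ruling out $\lambda=0$. Suppose $\lambda=0$; then $\Phi^{\ast}(Y)\leq 0$ with $Y\neq 0$. Evaluating the separation at the Slater point $X_0>0$ forces $c=0$ (from $c\leq 0$ and $0=\ip{Y}{\Phi(X_0)-B}\leq c$), and then $\ip{Y}{B}=\ip{\Phi^{\ast}(Y)}{X_0}\leq 0\leq\ip{Y}{B}$, so $\ip{\Phi^{\ast}(Y)}{X_0}=0$. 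Because $X_0$ is positive definite and $-\Phi^{\ast}(Y)\geq 0$, this forces $\Phi^{\ast}(Y)=0$; but then $\ip{Y}{\Phi(X)-B}=0$ for every $X$, so both $\M$ and $\C$ lie in the hyperplane $\{(u,s):\ip{Y}{u}=0\}$, contradicting properness of the separation. Hence $\lambda>0$, completing statement~1.

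I expect the main obstacle to be precisely this last step. The ordinary separating-hyperplane theorem only guarantees a supporting hyperplane and thus permits the degenerate multiplier $\lambda=0$, which yields a useless (infeasible) candidate; it is the positive-definiteness of $X_0$, fed into the \emph{proper}-separation theorem, that excludes degeneracy and makes the dual multiplier normalizable. Statement~2 follows by the symmetric argument, separating the analogous convex sets built from the dual cone $\{Z\in\herm{\Y}:Z\geq 0\}$ and the image of the map $Y\mapsto(\Phi^{\ast}(Y)-A,\ \ip{B}{Y})$, with the strictly dual-feasible $Y$ (for which $\Phi^{\ast}(Y)>A$) playing the role that $X_0$ plays above.
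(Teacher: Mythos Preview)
The paper does not prove this theorem at all: it is stated in the preliminaries as a standard background fact, with pointers to the semidefinite programming literature (Vandenberghe--Boyd, Lov\'asz, de~Klerk, Boyd--Vandenberghe) for details. So there is nothing to compare your argument against on the paper's side.

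That said, your separating-hyperplane proof is correct and is essentially the textbook argument. Two small points worth being explicit about if you write this up: first, the claim ``two disjoint nonempty convex sets in a finite-dimensional real space can be \emph{properly} separated'' is not quite the elementary separation theorem---it is Rockafellar's proper-separation theorem, whose hypothesis is that the relative interiors are disjoint (which of course follows from disjointness of the sets themselves). You are using exactly the right strength here, since ordinary (possibly degenerate) separation would not rule out $\lambda=0$. Second, your sketch of statement~2 as ``the symmetric argument'' is fine in spirit but structurally a little different: the dual constraint is an inequality, so one typically either introduces a slack variable $S\in\pos{\X}$ with $\Phi^{\ast}(Y)-S=A$ and runs the same construction in $\herm{\X}\oplus\real$, or invokes biduality for conic programs to reduce statement~2 to statement~1. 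Either route works; just be aware that ``mirror image'' is not literally swapping $\X\leftrightarrow\Y$.
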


In words, the first item of this theorem states that if the dual
problem is feasible and the primal problem is 
\emph{strictly feasible}, then strong duality holds and the optimal
dual solution is achievable.
The second item is similar, with the roles of the primal and dual
problems reversed.

\section{Wiesner's quantum money and simple generalizations
  \label{sec:generalized-schemes}} 

Wiesner's quantum money scheme, and straightforward generalizations of
it, may be modeled in the following way.
An ensemble of pure quantum states
$\E = \left\{ (p_k,\ket{\psi_k})\,:\,k = 1,\ldots,N\right\}$
is fixed, and assumed to be known to all (including any would-be
counterfeiters).
When preparing a bank note, the bank randomly selects a key
$k\in\{1,\ldots,N\}$ with probability $p_k$.
The bank note's quantum system is initialized to the state
$\ket{\psi_k}$, and the note is labeled by a unique serial number.
The bank records the serial number along with the secret key $k$.

When an individual wishes to verify a bank note, she brings it
to the bank.
The bank looks up the key $k$ and measures the note's quantum state
with respect to the projective measurement $\{\Pi,\I- \Pi\}$, for
$\Pi = \ket{\psi_k}\bra{\psi_k}$.
The measurement outcome associated with $\Pi$ causes the bank note to
be declared valid, while the outcome associated with $\I-\Pi$ causes
the bank note to be declared invalid.

A simple counterfeiting attack against a scheme of the form just
described attempts to create two copies of a bank note from one,
and is considered to be successful if both copies independently
pass the bank's verification procedure. 
We take the original bank note's quantum state to be stored in a
register $\reg{X}$ having associated Hilbert space $\X$.
The registers storing the quantum states corresponding to the two
copies of the bank note produced by a would-be counterfeiter will be
called $\reg{Y}$ and $\reg{Z}$.
The Hilbert spaces $\Y$ and $\Z$ associated with these registers
are taken to be isomorphic to $\X$, but will retain distinct
names for the sake of our analysis.

Mathematically speaking, a simple counterfeiting attack is described
by a quantum channel $\Phi$ transforming $\reg{X}$ to
$(\reg{Y},\reg{Z})$, taking the state $\rho\in\density{\X}$ to the
state $\Phi(\rho)\in\density{\Y\otimes\Z}$.
In order to be physically realizable, at least in an idealized sense,
the channel $\Phi$ must correspond to a completely positive and trace
preserving linear mapping of the form 
$\Phi: \lin{\X}\rightarrow\lin{\Y\otimes\Z}$.
Conditioned on the bank having chosen the key $k$, the probability of
success for an attack described by $\Phi$ is given by
$\bra{\psi_k \otimes \psi_k} \Phi(\ket{\psi_k}\bra{\psi_k})
\ket{\psi_k\otimes\psi_k}$.
Averaging over the possible choices of $k$, 
 the overall success probability of a counterfeiting
attack is
\begin{equation} 
  \label{eq:probability-to-counterfeit}
  \sum_{k = 1}^N p_k
  \bra{\psi_k \otimes \psi_k} \Phi(\ket{\psi_k}\bra{\psi_k})
  \ket{\psi_k\otimes\psi_k}.
\end{equation}

\subsection{An SDP formulation of simple counterfeiting
  attacks \label{sec:counterfeit-sdp}} 

We now describe how the optimal success probability of a
counterfeiting strategy, which is represented by the supremum of the
probability \eqref{eq:probability-to-counterfeit} over all valid
channels $\Phi:\lin{\X}\rightarrow\lin{\Y\otimes\Z}$, may be
represented by a semidefinite program.
A similar semidefinite programming formulation may be found
in \cite{AudenaertD02,CerfF06,MolinaW11}, for instance.

The formulation makes use of the Choi-Jamio{\l}kowski representation 
$J(\Phi)$ of a given channel $\Phi$, as described in
Section~\ref{sec:preliminaries}.
Combining the characterization of all such representations that
correspond to quantum channels given there together
with~\eqref{eq:jphi} and the expression
\eqref{eq:probability-to-counterfeit}, it is not hard to see that the
optimal success probability of any simple counterfeiting strategy is
given by the following semidefinite program:\vspace{2mm}
\begin{center}
\begin{minipage}{0.5\textwidth}
  \centerline{\underline{Primal problem}}\vspace{-6mm}
  \begin{align*}
    \text{maximize:}\;\; & \ip{Q}{X}\\
    \text{subject to:}\;\; & \tr_{\Y\otimes\Z}(X) = \I_{\X}\\
    & X\in\pos{\Y\otimes\Z\otimes\X}
  \end{align*}
\end{minipage}\hspace*{5mm}
\begin{minipage}{0.4\textwidth}
  \centerline{\underline{Dual problem}}\vspace{-6mm}
  \begin{align*}
    \text{minimize:}\;\; & \tr(Y)\\
    \text{subject to:}\;\; & \I_{\Y\otimes\Z}\otimes Y \geq Q\\
    & Y \in \herm{\X}
  \end{align*}
\end{minipage}
\end{center}
where
\[
Q = \sum_{k = 1}^N p_k
\ket{\psi_k \otimes \psi_k \otimes \overline{\psi_k}}
\bra{\psi_k \otimes \psi_k \otimes \overline{\psi_k}}.
\]
(The dual problem is obtained from the primal problem in a routine way,
as described in Section~\ref{sec:preliminaries}.)

Because the primal and dual problems are both strictly feasible (as
follows by taking $X$ and $Y$ to be appropriately chosen multiples of
the identity, for example), it follows from
Theorem~\ref{theorem:Slater} that the optimal values for the primal
and dual problems are always equal, and are both achieved by feasible
choices for $X$ and $Y$.

\subsection{Analysis of Wiesner's original scheme (single-qubit case)}

To analyze Wiesner's original quantum money scheme, we begin by
considering the single-qubit (or $n=1$) case.
The analysis of the scheme for arbitrary values of $n$ will follow
from known results concerning \emph{product properties} of
semidefinite programs, as is described later in
Section~\ref{sec:par-rep}.

In the single-qubit case, Wiesner's quantum money scheme corresponds
to the ensemble
\[
\E = 
\left\{
\left(\frac{1}{4}, \, \ket{0}\right),
\left(\frac{1}{4}, \, \ket{1}\right),
\left(\frac{1}{4}, \, \ket{+}\right),
\left(\frac{1}{4}, \, \ket{-}\right)
\right\},
\]
which yields the operator 
\[
Q =  
\frac{1}{4} \left( \ket{000} \bra{000} + \ket{111}\bra{111} +  
\ket{+++} \bra{+++}  + \ket{---} \bra{---}   \right)
\]
in the semidefinite programming formulation described above.
We claim that the optimal value of the semidefinite program in this
case is equal to 3/4.
To prove this claim, it is sufficient to exhibit explicit primal and
dual feasible solutions achieving the value $3/4$.
For the primal problem, the value 3/4 is obtained by the solution
$X = J(\Phi)$, for $\Phi$ being the channel
\[
\Phi(\rho) = A_0 \rho A_0^{\ast} + A_1 \rho A_1^{\ast},
\]
where
\[
A_0 = \frac{1}{\sqrt{12}}
\begin{pmatrix} 3 & 0 \\ 0 & 1 \\  0 & 1 \\ 1 & 0 \end{pmatrix}
\qquad\text{and}\qquad
A_1 =  \frac{1}{\sqrt{12}}
\begin{pmatrix} 0 & 1 \\ 1 & 0 \\  1 & 0 \\ 0 & 3 \end{pmatrix}.
\]
For the dual problem, the value 3/4 is obtained by the solution
$Y = \frac{3}{8}\I_{\X}$, whose feasibility may be verified by computing
$\norm{Q} = 3/8$.

\subsection{Optimal single-qubit schemes}

It is natural to ask if the security of Wiesner's original scheme can
be improved through the selection of a different ensemble $\E$ in
place of the one considered in the previous section.
The answer is ``yes,'' as follows from our analysis of Wiesner's
original scheme together with the results of \cite{PYJLC11}, wherein
the authors consider the ensemble 
\[
\textstyle
\E = 
\left\{
\left(\frac{1}{6}, \, \ket{0}\right),
\left(\frac{1}{6}, \, \ket{1}\right),
\left(\frac{1}{6}, \, \ket{+}\right),
\left(\frac{1}{6}, \, \ket{-}\right),
\left(\frac{1}{6}, \, \frac{\ket{0} + i\ket{1}}{\sqrt{2}}\right),
\left(\frac{1}{6}, \, \frac{\ket{0} - i\ket{1}}{\sqrt{2}}\right)
\right\}.
\]
The operator $Q$ that one obtains is given by
\begin{equation} \label{eq:Q-symmetric}
  Q = \frac{1}{\op{rank}(\Pi)}
  \left(\I_{\lin{\Y}}\otimes\I_{\lin{\Z}}\otimes\op{T}\right)(\Pi)
\end{equation}
for $\Pi$ being the projection onto the symmetric subspace of
$\Y\otimes\Z\otimes\X$ and $\op{T}$ being the transposition
mapping with respect to the standard basis of $\X$.

The optimal value of the corresponding semidefinite program is 2/3.
Indeed, a primal feasible solution achieving the value $2/3$ is given by
$X = J(\Phi)$ for $\Phi$ being the channel
\[
\Phi(\rho) =  A_0 \rho A_0^{\ast} + A_1 \rho A_1^{\ast},
\]
where
\[
A_0 = \frac{1}{\sqrt{6}}
\begin{pmatrix}
  2 & 0 \\ 
  0 & 1 \\  
  0 & 1 \\ 
  0 & 0 
\end{pmatrix}
\qquad\text{and}\qquad
A_1 = \frac{1}{\sqrt{6}} 
\begin{pmatrix}
  0 & 0 \\ 
  1 & 0 \\  
  1 & 0 \\ 
  0 & 2 
\end{pmatrix}.
\]
(This channel is the optimal qubit cloner of Bu\v{z}ek and Hillery
\cite{BuzekH96}. )
A dual feasible solution achieving the bound 2/3 is given by
$Y = \frac{1}{3}\I_{\X}$ (with this solution's feasibility following from a
calculation of $\norm{Q} = 1/3$).

It is interesting to note that the same bound 2/3 can be obtained by a
four-state ensemble
\[
\textstyle
\E = \left\{
\left(\frac{1}{4}, \, \ket{\tau_1}\right),
\left(\frac{1}{4}, \, \ket{\tau_2}\right),
\left(\frac{1}{4}, \, \ket{\tau_3}\right),
\left(\frac{1}{4}, \, \ket{\tau_4}\right)
\right\},
\]
where $\{\ket{\tau_1},\ldots\ket{\tau_4}\}$ are any four states
forming a single qubit SIC-POVM
\cite{RenesBSC04}.
The operator $Q$ corresponding to any such ensemble is identical to the
one \eqref{eq:Q-symmetric} from the six-state ensemble above, and
therefore yields the same optimal value for the semidefinite program.

The schemes just mentioned are the best possible single qubit schemes.
To see this, one may simply consider the performance of 
$\Phi$ (i.,e., the Bu\v{z}ek--Hillery cloner), for which it
follows by a direct calculation that
\[
\bra{\psi\otimes\psi} \Phi(\ket{\psi}\bra{\psi}) \ket{\psi\otimes\psi}
= \frac{2}{3}
\]
for every state $\ket{\psi}$.
This shows that the optimal primal value, and therefore the optimal
counterfeiting probability, is always at least~2/3.

\subsection{Parallel repetitions of generalized Wiesner
  schemes \label{sec:par-rep}}

Wiesner's original scheme may be viewed as the $n$-fold parallel
repetition of a scheme wherein the spaces $\X$, $\Y$, and $\Z$
each represent a single qubit, and where the initial state of each
bank note is a state chosen uniformly from the set
$\{\ket{0},\ket{1},\ket{+},\ket{-}\}$.
That is, the preparation and verification of each $n$-qubit bank note
is, from the bank's perspective, equivalent to the \emph{independent}
preparation and verification of $n$ single-qubit bank notes; and a
successful counterfeiting attack is equivalent to a successful
counterfeiting attack against all $n$ of the single-qubit notes.
The value of $n$ plays the role of a security parameter, given that it becomes
increasingly hard to successfully counterfeit $n$ single-qubit bank
notes in a row, without failure, as $n$ grows large.

Now, there is nothing that forces a counterfeiter to attempt to
counterfeit an $n$-qubit bank note by treating each of its $n$ qubits
independently.
However, it is easily concluded from the semidefinite programming
formulation above that a counterfeiter gains no advantage whatsoever
by correlating multiple qubits during an attack.
This, in fact, is true for arbitrary choices of the ensemble $\E$, as
follows from a general result of Mittal and Szegedy
\cite{MittalS07} regarding \emph{product properties} of some
semidefinite programs.
(In our case, this property follows from the fact that the operator
$Q$ defining the objective function in the primal problem is always
positive semidefinite.)

In greater detail, let us consider the $n$-fold repetition of a
scheme, in which a single repetition of the scheme gives rise to a
semidefinite program determined by $Q\in\pos{\Y\otimes\Z\otimes\X}$.
Let us write $\X_j$, $\Y_j$, and $\Z_j$ to denote copies of the spaces
$\X$, $\Y$, and $\Z$ that represent the $j$-th repetition of the
scheme, for $j = 1,\ldots,n$, and let us write
$\X^{\otimes n} = \X_1\otimes\cdots\otimes\X_n$,
$\Y^{\otimes n} = \Y_1\otimes\cdots\otimes\Y_n$, and
$\Z^{\otimes n} = \Z_1\otimes\cdots\otimes\Z_n$.
The semidefinite program that describes the optimal simple
counterfeiting attack probability for the $n$-fold repetition is as
follows:
\begin{center}
\begin{minipage}{0.45\textwidth}
  \centerline{\underline{Primal problem}}\vspace{-6mm}
  \begin{align*}
    \text{maximize:}\;\; & \ip{W_\pi(Q^{\otimes n}) W_\pi^{\ast}}{X}\\
    \text{subject to:}\;\; & \tr_{\Y^{\otimes n}\otimes\Z^{\otimes
        n}}(X) = \I_{\X^{\otimes n}}\\
    & X\in\pos{\Y^{\otimes n}\otimes\Z^{\otimes n}\otimes\X^{\otimes n}}
  \end{align*}
\end{minipage}\hspace*{4mm}
\begin{minipage}{0.45\textwidth}
  \centerline{\underline{Dual problem}}\vspace{-6mm}
  \begin{align*}
    \text{minimize:}\;\; & \tr(Y)\\
    \text{subject to:}\;\; & \I_{\Y^{\otimes n}\otimes\Z^{\otimes
        n}}\otimes Y \geq
    W_\pi(Q^{\otimes n}) W_\pi^{\ast}\\
    & Y \in \herm{\X^{\otimes n}}
  \end{align*}
\end{minipage}
\end{center}

\noindent
In this semidefinite program, $\W_{\pi}$ is a unitary operator
representing a permutation of Hilbert spaces:
\begin{multline*}
  \qquad
  W_{\pi}
  \ket{(y_1\otimes z_1\otimes x_1)\otimes \cdots \otimes 
    (y_n\otimes z_n\otimes x_n)} \\
  = \ket{(y_1\otimes\cdots\otimes y_n)
    \otimes
    (z_1\otimes\cdots\otimes z_n)
    \otimes
    (x_1\otimes\cdots\otimes x_n)},
  \qquad
\end{multline*}
for all choices of $\ket{x_j}\in\X_j$, $\ket{y_j}\in\Y_j$, and
$\ket{z_j}\in\Z_j$, for $j = 1,\ldots,n$.

If the optimal value of the semidefinite program is $\alpha$ in the
single-repetition case, then the optimal value of the semidefinite
program for the $n$-fold repetition case is necessarily $\alpha^n$.
This may be proved by considering the primal and dual solutions
$X = W_{\pi}(X_1\otimes\cdots\otimes X_n)W_{\pi}^{\ast}$
and $Y = Y_1\otimes\cdots\otimes Y_n$, for $X_1,\ldots,X_n$ being
optimal primal solutions and $Y_1,\ldots,Y_n$ being optimal dual
solutions for the single-repetition semidefinite program.
The values obtained by these solutions are both $\alpha^n$.
Primal feasibility of $X$ is straightforward, while dual feasibility
of $Y$ follows from the fact that $A \geq B \geq 0$ implies
$A^{\otimes n}\geq B^{\otimes n}$ for all positive semidefinite $A$
and $B$.

\subsection{Threshold results}
 \label{sec:threshold}

One may also consider noise-tolerant variants of Wiesner's 
scheme, as was done in \cite{PYJLC11}.
In the setting discussed in the previous subsection
where $n$ repetitions of a particular scheme are performed, we may
suppose that the bank's verification procedure declares a bank note
valid whenever at least $t$ out of $n$ repetitions succeed, for some
choice of $t<n$, as opposed to requiring that all $n$ repetitions
succeed.  

One might hope that a similar analysis to the one in the previous
subsection will lead to an optimal counterfeiting probability of
\begin{equation} \label{eq:threshold-value}
\sum_{t\leq j\leq n}
\binom{n}{j}\alpha^j(1-\alpha)^{n-j}
\end{equation}
for such a scheme, for $\alpha$ being the optimal counterfeiting
probability for a single repetition.
This is the probability of successful counterfeiting when each
repetition is attacked independently.
In general, however, this bound may not be correct: the main result of
\cite{MolinaW11} demonstrates a related setting in which an analogous
bound does not hold, and explains the obstacle to obtaining such a
bound in general.
However, for some schemes, including Wiesner's original scheme and all
of the other specific schemes (including the classical verification ones in 
Section~\ref{sec:simple-classical-ver}) discussed in this paper, this bound will
be correct.
Letting $d = \dim{\X}$, the specific assumptions that we require to
obtain the bound \eqref{eq:threshold-value} are that
\begin{equation} \label{eq:ensemble-state}
  \sum_{k=1}^N p_k \ket{\psi_k}\bra{\psi_k} = \frac{1}{d}\I,
\end{equation} 
and that $Y = \frac{\alpha}{d} \I_{\X}$ is an optimal dual solution to
the single-repetition semidefinite program (from which it follows
$\norm{Q}= \frac{\alpha}{d}$).

To prove that these requirements are sufficient, let us introduce the
following notation.
We will write $Q_1$ in place of $Q$ to denote the operator that
specifies the semidefinite program representing a successful
counterfeiting attack, and we will also define
\[
Q_0 = \sum_{k = 1}^N p_k \left( \I_{\Y\otimes\Z} -
\ket{\psi_k\otimes\psi_k}\bra{\psi_k\otimes\psi_k}\right)
\otimes \ket{\overline{\psi_k}}\bra{\overline{\psi_k}},
\]
which has a complementary relationship to $Q_1$;
it represents a failure to counterfeit in a given repetition.
The semidefinite program describing the optimal counterfeiting
probability for the $n$-fold repetition scheme, where successes in $t$
repetitions are required for a validation, is then as follows:
\begin{center}
\begin{minipage}{0.45\textwidth}
  \centerline{\underline{Primal problem}}\vspace{-6mm}
  \begin{align*}
    \text{maximize:}\;\; & \ip{W_\pi R W_\pi^{\ast}}{X}\\
    \text{subject to:}\;\; & \tr_{\Y^{\otimes n}\otimes\Z^{\otimes
        n}}(X) = \I_{\X^{\otimes n}}\\
    & X\in\pos{\Y^{\otimes n}\otimes\Z^{\otimes n}\otimes\X^{\otimes n}}
  \end{align*}
\end{minipage}\hspace*{4mm}
\begin{minipage}{0.45\textwidth}
  \centerline{\underline{Dual problem}}\vspace{-6mm}
  \begin{align*}
    \text{minimize:}\;\; & \tr(Y)\\
    \text{subject to:}\;\; & \I_{\Y^{\otimes n}\otimes\Z^{\otimes
        n}}\otimes Y \geq W_\pi R W_\pi^{\ast}\\
    & Y \in \herm{\X^{\otimes n}}
  \end{align*}
\end{minipage}
\end{center}
where
\[
R = \sum_{\substack{a_1,\ldots,a_n\in\{0,1\}\\a_1 + \cdots + a_n\geq t}}
Q_{a_1} \otimes \cdots \otimes Q_{a_n}.
\]

To prove that the optimal value of this semidefinite program is given
by the expression \eqref{eq:threshold-value}, it suffices to exhibit
primal and dual feasible solutions achieving this value.
As for the standard $n$-fold repetition case described in the previous
subsection, it holds that $X =
W_{\pi}(X_1\otimes\cdots X_n)W_{\pi}^{\ast}$ 
is a primal feasible solution that achieves the desired value, where
again $X_1,\ldots,X_n$ are optimal primal solutions to the
single-repetition semidefinite program.
(This solution simply corresponds to an attacker operating
independently and optimally in each repetition.)
For the dual problem, we take
\[
Y = \norm{R} \I_{\X^{\otimes n}},
\]
which is clearly dual-feasible.
The condition \eqref{eq:ensemble-state} implies that
$Q_0 = \frac{1}{d}\I_{\Y\otimes\Z\otimes\X} - Q_1$, and a
consideration of spectral decompositions of the commuting operators
$Q_0$ and $Q_1$ reveals that
\[
\norm{R} = \frac{1}{d^n}
\sum_{t\leq j\leq n}
\binom{n}{j}\alpha^j(1-\alpha)^{n-j},
\]
which establishes the required bound.

\subsection{Optimal schemes in higher dimensions
  \label{sec:opt-higher-dimensions}}

We have observed that the best single-qubit variant of Wiesner's quantum
money scheme has an optimal counterfeiting probability of $2/3$, and
we know that the $n$-fold parallel repetition of this scheme has an optimal
counterfeiting probability of $(2/3)^n$.
Thus, bank notes storing a quantum state of dimension $d = 2^n$ can
have an optimal counterfeiting probability of $(2/3)^n$.
It is natural to ask whether one can do better, using a scheme that is
not given by the $n$-fold parallel repetition of a single qubit
scheme.

The answer is that there are better schemes (provided $n>1$).
More generally, for every $d$ representing the dimension of the state
stored by a quantum bank note, there exist schemes whose optimal
counterfeiting probability is equal to $2/(d+1)$, which is the best
that is possible: Werner's quantum cloning map \cite{Werner98} will
always succeed in counterfeiting any quantum bank note of dimension $d$
with probability $2/(d+1)$. 
The following proposition shows that there exists a scheme that matches
this bound in all dimensions~$d$. 

\begin{prop}\label{lem:opt-higher-dimensions}
Let $\mathcal{E}=\{p_k,\ket{\psi_k}\}$ be any ensemble of
$d$-dimensional states for which the operator
\[
Q = \sum_{k = 1}^N p_k
\ket{\psi_k \otimes \psi_k \otimes \overline{\psi_k}}
\bra{\psi_k \otimes \psi_k \otimes \overline{\psi_k}}
\]
is given by
\begin{equation}
Q \,=\, 
\frac{1}{\op{rank}(\Pi)}
\left(\I_{\lin{\complex^d}}\otimes\I_{\lin{\complex^d}}\otimes
\op{T}\right)(\Pi),\label{eq:qsym2}
\end{equation}
where $\op{T}$ is the transposition mapping with respect to the
standard basis of $\complex^d$ and $\Pi$ is the orthogonal projector
on the symmetric subspace of 
$\complex^d \otimes \complex^d \otimes \complex^d$. 
Then no simple counterfeiting strategy can succeed
against the money scheme derived from $\mathcal{E}$ with probability
more than $2/(d+1)$.
\end{prop}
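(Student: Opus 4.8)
The plan is to exhibit a feasible dual solution to the SDP from Section~\ref{sec:counterfeit-sdp} whose objective value equals $2/(d+1)$; by weak duality (Theorem~\ref{theorem:Slater}) this bounds the optimal counterfeiting probability from above. By analogy with the single-qubit case, where $Y=\tfrac{3}{8}\I_\X$ worked because $\norm{Q}=3/8$, the natural guess here is $Y=\norm{Q}\,\I_{\complex^d}$, so the entire argument reduces to computing the operator norm of $Q$ as given by~\eqref{eq:qsym2}. Indeed, once we know $\norm{Q}=2/(d+1)$, the dual constraint $\I_{\Y\otimes\Z}\otimes Y\geq Q$ holds trivially, and $\tr(Y)=d\cdot\norm{Q}=2d/(d+1)$\,---\,wait, that overshoots, so more care is needed: we actually want $\tr(Y)=2/(d+1)$, which forces $Y=\tfrac{2}{d(d+1)}\I$, and feasibility then requires $\norm{Q}\leq 2/(d(d+1))$, not $2/(d+1)$. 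The cleaner route is therefore to compute $\norm{Q}$ directly and then verify both that $Y=\norm{Q}\,\I$ is feasible and that its trace is the claimed bound; I expect these two facts to be reconciled by the rank of $\Pi$, since $\dim\Pi=\binom{d+2}{3}$ and the normalization $1/\op{rank}(\Pi)$ will do the bookkeeping.

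First I would recall that $(\I\otimes\I\otimes\op{T})$ is a unitary conjugation on Hermitian operators up to the partial transpose subtlety; more usefully, I would use the identity~\eqref{eq:jphi} relating $\bra{\phi\otimes\overline\psi}J(\Phi)\ket{\phi\otimes\overline\psi}$ to channel fidelities, together with the observation already made in the excerpt that the Bu\v{z}ek--Hillery / Werner cloner achieves a state-independent fidelity. Concretely, Werner's optimal symmetric $1\to 2$ cloner in dimension $d$ has single-copy clone fidelity, and more to the point \emph{joint} two-copy fidelity $2/(d+1)$, for \emph{every} pure input $\ket\psi$. Plugging $\Phi=\Phi_{\mathrm{Werner}}$ into the primal problem shows the optimum is at least $2/(d+1)$; the content of the proposition is the matching upper bound.

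For the upper bound I would compute $\norm{Q}$ by diagonalizing $\Pi$. Since $\Pi$ projects onto the symmetric subspace of $(\complex^d)^{\otimes 3}$, and the transposition $\op{T}$ acts only on the last factor, $Q=\tfrac{1}{\op{rank}(\Pi)}(\I\otimes\I\otimes\op{T})(\Pi)$ has the same eigenvalues as $\tfrac{1}{\op{rank}(\Pi)}$ times the eigenvalues of the partially-transposed symmetric projector. The key structural fact is that $(\I\otimes\I\otimes\op{T})(\Pi)$, restricted appropriately, is proportional to a projector onto a space of dimension related to $d\cdot\op{rank}(\Pi_{\mathrm{sym},2})$ where $\Pi_{\mathrm{sym},2}$ is the symmetric projector on two copies; this is exactly the computation underlying Werner's result and I would cite~\cite{Werner98} for it rather than redo it. The output is $\norm{(\I\otimes\I\otimes\op{T})(\Pi)} = \op{rank}(\Pi)\cdot \tfrac{2}{d(d+1)}$, hence $\norm{Q}=\tfrac{2}{d(d+1)}$, and taking $Y=\tfrac{2}{d(d+1)}\I_{\complex^d}$ gives a dual-feasible solution with $\tr(Y)=2/(d+1)$.

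The main obstacle is the spectral analysis of the partially transposed symmetric projector\,---\,getting the normalization exactly right so that the trace of the dual solution comes out to $2/(d+1)$ and not some nearby quantity. The check $d=2$, $\op{rank}(\Pi)=\binom{4}{3}=4$, $\norm{Q}=2/6=1/3$, $\tr(Y)=2/3$ against the single-qubit six-state / SIC scheme already treated in the excerpt is the sanity test I would run first; if that matches, the general-$d$ formula is almost certainly correct and the remaining work is purely the representation-theoretic eigenvalue count, for which invoking~\cite{Werner98} is legitimate.
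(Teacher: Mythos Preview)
Your high-level strategy is exactly the paper's: take $Y=\norm{Q}\,\I_{\X}$ as a dual feasible point, so that the whole proof reduces to establishing $\norm{Q}=2/(d(d+1))$, and then $\tr(Y)=d\cdot\norm{Q}=2/(d+1)$. Your $d=2$ sanity check is the right one and matches the six-state/SIC case.

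Where the two diverge is in the actual computation of $\norm{Q}$. You plan to pull the spectral analysis of $(\I\otimes\I\otimes\op{T})(\Pi)$ out of Werner's cloning paper~\cite{Werner98}, describing that operator as ``proportional to a projector.'' That last claim is not correct: the paper instead invokes the Eggeling--Werner classification~\cite{EggelingW01} of operators commuting with all $U\otimes U\otimes\overline{U}$, obtaining the explicit decomposition
\[
Q \,=\, \frac{1}{\op{rank}(\Pi)}\Bigl(\tfrac{1}{3}S_+ + \tfrac{d+2}{6}(S_0+S_1)\Bigr),
\]
where $S_+$ and $S_0$ are mutually orthogonal projections and $S_1$ satisfies $S_1^2=S_0$, $S_+S_1=0$. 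So $(\I\otimes\I\otimes\op{T})(\Pi)$ has \emph{two} distinct nonzero eigenvalues, $\tfrac{1}{3}$ and $\tfrac{d+2}{3}$, and it is the latter that gives $\norm{Q}=\tfrac{d+2}{3\op{rank}(\Pi)}=\tfrac{2}{d(d+1)}$. Your plan to cite~\cite{Werner98} is morally right---the eigenvalue structure is closely tied to Werner's cloning analysis---but that paper does not state the norm of the partially transposed symmetric projector in a form you can quote directly, whereas the Eggeling--Werner machinery delivers it cleanly. If you want to avoid that reference, the cleanest self-contained route is to observe the $U\otimes U\otimes\overline{U}$ symmetry of $Q$ and diagonalize within the resulting low-dimensional commutant.
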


Before proving the proposition, we note that any ensemble $\mathcal{E}$
obtained from a complex projective $(3,3)$-design (also known as a
quantum $3$-design~\cite{AmbE07}) satisfies~\eqref{eq:qsym2}, and thus
leads to an optimal $d$-dimensional money scheme. This also suggests
that one might obtain more efficient schemes (i.e., involving less
possible states for each part of the note) with security properties
similar to the ones described here if approximate designs are
considered instead.

\begin{proof}[Proof of Proposition~\ref{lem:opt-higher-dimensions}.]
Because we are looking for an upper bound on the maximum
counterfeiting probability, it suffices to construct a good feasible
solution $Y$ to the dual SDP described in
Section~\ref{sec:counterfeit-sdp}. 
We will choose $Y = \|Q\| \I_{\X}$, which
is a feasible dual solution with corresponding objective value 
$\tr(Y) = d \|Q\|$. We indicate how results from~\cite{EggelingW01}
may be used to show that $\|Q\| = 2/(d(d+1))$, proving the
proposition.

The operator $Q$ commutes with all operators of the
form $U\otimes U \otimes \overline{U}$, where $U$ is any unitary
acting on $\complex^d$. In Section VI.A of~\cite{EggelingW01} it is shown that
any such operator can be written as a linear combination of six 
conveniently chosen
Hermitian operators $S_+,S_-,S_0,S_1,S_2,S_3$ (for a definition see
Eqs.~(25a)--(25f) of~\cite{EggelingW01}). 
For our operator $Q$ we obtain the decomposition
\begin{equation}\label{eq:qs}
Q\,=\, \frac{1}{\op{rank}(\Pi)}\Big( \frac{1}{3}S_+ + \frac{d+2}{6}
\big(S_0 + S_1\big)\Big),
\end{equation}
where 
\begin{gather*}
S_+ = \frac{\I + V}{2} - \frac{1}{2(d+1)}\big( X +XV+VX + VXV  \big),\\
S_0+S_1 = \frac{1}{d+1}\big(X+XV+VX+VXV\big),
\end{gather*}
$V$ is the operator that permutes the first two registers on which $Q$
acts, and $X$ the partial transpose of the operator permuting the last
two registers.  
Moreover, as shown in~\cite{EggelingW01}, $S_+$ and $S_0$ are mutually
orthogonal projections, $S_0 S_1 = S_1 S_0 = S_1$, 
$S_+ S_1 = S_1 S_+ = 0$, and $S_1^2 = S_0$.
Hence, the decomposition~\eqref{eq:qs} shows that the operator norm of
$Q$ satisfies  
\[
\|Q\|
\, = \,
\frac{1}{\op{rank}(\Pi)}\frac{d+2}{3}
\,=\, \frac{2}{d(d+1)},
\]
as $\op{rank}(\Pi) = \binom{d+2}{3}$. 
\end{proof}

\section{Money schemes with classical verification}
\label{sec:classical-ver}

In this section we introduce a natural variant of Wiesner's scheme, as well as
higher-dimensional generalizations of it, in which the verification is
done through \emph{classical} communication with the bank. 
To distinguish the corresponding bank notes from the ones discussed in
the previous section, we will call them \emph{tickets}.\footnote{%
  As we will see, successful verification of a ticket necessarily
  entails its destruction. 
  This is unavoidable, as shown in~\cite{Gav11}. 
  To avoid this issue one may concatenate many tickets together to
  create a single bill, that will be able to go through as many 
	verification attempts as it contains tickets.}  

\subsection{Description of quantum tickets}\label{sec:class-tickets}

A quantum ticket is defined in the same way as a bank note: it is a
quantum state $\ket{\psi_k}$, where $k$ is a secret key kept by the
bank, together with a unique serial number. 
We consider schemes
in which the classical verification procedure has the following simple
form. The user first identifies herself to the bank by announcing her
ticket's serial number. The bank then sends her a classical
``challenge'' $c\in C$ chosen uniformly at random, where $C$ is some
fixed finite set. Depending on $c$, an honest user will perform a
measurement $\Pi_c = \{\Pi_c^a\}_{a\in A}$ on her ticket, and report
the outcome $a$ to the bank. The bank then looks up the secret key $k$
associated with the user's ticket, and accepts $a$ if and only if the
triple $(a,c,k)$ falls in a fixed, publicly known set $S$ of valid
triples.\footnote{For instance, the bank could accept all
  ``plausible'' answers, i.e., all $a$ such that
  $\bra{\psi_k}\Pi_c^a\ket{\psi_k} >0$. This condition ensures that
  honest users are always accepted.} 

A simple counterfeiting attack against such a scheme will attempt to
use just \emph{one} quantum ticket in order to successfully answer
\emph{two} independent challenges from the bank. Such a counterfeiter
may be modeled by a collection of POVMs $A_{c_1c_2} =
\{A_{c_1c_2}^{a_1a_2}\}_{a_1a_2}$, and its success probability is
\begin{equation}\label{eq:class-count}
  \sum_{k=1}^N p_k\, \frac{1}{|C|^2}\sum_{c_1,c_2} \,
  \sum_{\substack{(a_1,a_2):\\(a_1,c_1,k)\in S\\(a_2,c_2,k)\in S}} 
  \bra{\psi_k} A_{c_1 c_2}^{a_1 a_2} \ket{\psi_k},
\end{equation}
which is the ``classical-verification'' analogue
of~\eqref{eq:probability-to-counterfeit}.
By letting registers $\reg{Y}$ and
$\reg{Z}$ contain the answers $a_1$ and $a_2$ respectively, and
$\reg{X}$ contain the counterfeiter's input (the state $\ket{\psi_k}$
and the two challenges $c_1,c_2$), the problem of
maximizing~\eqref{eq:class-count} over all possible counterfeiting
strategies can be cast as a semidefinite program of the same form as
the one introduced in Section~\ref{sec:counterfeit-sdp}, with the
corresponding operator $Q$ defined as
$$ Q = \sum_{k=1}^N p_k\, \frac{1}{|C|^2} \sum_{c_1,c_2}
\,\sum_{\substack{(a_1,a_2):\\(a_1,c_1,k)\in S\\(a_2,c_2,k)\in S}}
\ket{a_1}\ket{a_2}\ket{c_1,c_2,\psi_k}
\bra{a_1}\bra{a_2}\bra{c_1,c_2,\psi_k}.$$ 
Since $Q$ is diagonal on the first $4$ registers, without
loss of generality an optimal solution $X$ to the primal problem will
be correspondingly block-diagonal, 
$$X \,=\, \sum_{a_1,a_2,c_1,c_2}
\ket{a_1,a_2,c_1,c_2}\bra{a_1,a_2,c_1,c_2} \otimes
X_{c_1c_2}^{a_1a_2},$$
and the SDP constraints are immediately seen to exactly enforce that
$\{X_{c_1c_2}^{a_1a_2}\}_{a_1a_2}$ is a POVM for every $(c_1,c_2)$.

We note that the problem faced by the counterfeiter can be cast as
a special instance of the more general \emph{state discrimination problem}.
Indeed, the counterfeiter's goal is to distinguish between the following: for every
pair of possible answers $(a_1,a_2)$, there is a mixed state corresponding
to the mixture over all states $\ket{c_1}\ket{c_2}\ket{\Psi_k}$
that for which $(a_1,a_2)$ would be a valid answer.
(Each state is weighted proportionally to the probability of the pair $(c_1,c_2)$
of being chosen as challenges by the bank, and of $\ket{\Psi_k}$ being chosen
as a bank note.) 
 As such, the fact that the optimal
counterfeiting strategy can be cast as a semidefinite program follows from
similar formulations for the general state discrimination problem (as the ones considered
in e.g.~\cite{EldarMV03}).

\subsection{Analysis of a simple class of qudit schemes}
\label{sec:simple-classical-ver}

We further restrict our attention to a natural class of extensions of
the classical-verification variant of Wiesner's scheme described in
the introduction. The schemes we consider are parametrized by a
dimension $d$ and two fixed bases
$\big\{\ket{e_0^0},\ldots,\ket{e_{d-1}^0}\big\}$ and
$\big\{\ket{e_0^1},\ldots,\ket{e_{d-1}^1}\big\}$ of $\complex^d$.\footnote{
It is easy to see that increasing the number of bases will only result
in weaker security: indeed, the more the bases the less likely it is
that the bank's randomly chosen challenge will match the basis used to 
encode each qudit.}
Each scheme is defined as the $n$-fold parallel repetition of a basic
scheme in which $N = 2d$, the states $|\psi_{(t,b)}\rangle$ are the
$\ket{e_t^b}$ for $t\in \{0,\ldots,d-1\}$ and $b\in\{0,1\}$, the random challenge
is a bit $c\in\{0,1\}$, and the valid answers are $a=t$ if $b=c$, and
any $a$ if $b\neq c$. Valid answers may be provided by an honest user
who measures his ticket in the basis corresponding to $c$. By writing
out the corresponding operator $Q$ and constructing a feasible
solution to the dual SDP, we show the following lemma, from which
Theorem~\ref{thm:main-class} follows directly. 

\begin{lemma}\label{lem:optdual-class} 
  For every simple counterfeiting attack against the $n$-qudit
  classical-verification scheme described above, the success
  probability is at most $\big(\frac{3}{4}+ \frac{\sqrt{c}}{4}\big)^n$,
  where $c = \max_{s,t} \left| \left\langle e_s^0|e_{t}^1
  \right\rangle\right|^2$
  is the \emph{effective overlap}.\footnote{%
    For any two bases of $\complex^d$,
    $c\geq 1/d$,
    and this is achieved for a pair of mutually unbiased bases. 
    This quantity also arises naturally in the study of uncertainty
    relations (see e.g.~\cite{TomR11}), of which our result may be
    seen as giving a special form.}
  \linebreak
  If $d=2$, there is always a counterfeiting strategy that
  achieves this bound.
\end{lemma}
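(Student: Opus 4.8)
The plan is to prove Lemma~\ref{lem:optdual-class} by exhibiting an explicit dual feasible solution for the single-repetition semidefinite program and then invoking the product property of Section~\ref{sec:par-rep}. First I would write out the operator $Q$ for the basic single-qudit scheme. Because the challenge is a bit $c\in\{0,1\}$, there are two contributions: when $c_1=c_2=b$ (probability $1/4$) the only valid pair of answers is $(a_1,a_2)=(t,t)$; when at least one $c_i\neq b$, any answer on that register is accepted, producing a sum over the full answer register, i.e.\ an identity factor $\I=\sum_a \ket{a}\bra{a}$. Grouping the four equally likely $(c_1,c_2)$ patterns, $Q$ decomposes as $\frac14$ times a sum of four terms, each of the form $P_1\otimes P_2\otimes\ket{\psi_k}\bra{\psi_k}$ where each $P_i$ is either $\ket{t}\bra{t}$ or $\I$, averaged over the $2d$ choices of $(t,b)$. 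The key object to control is the operator norm $\|Q\|$, since the dual solution $Y=\|Q\|\,\I_{\X}$ has objective value $\dim(\X)\,\|Q\| = \dim(\X)\,\|Q\|$ and this must equal $\tfrac34+\tfrac{\sqrt c}4$.

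Next I would estimate $\|Q\| = \max_{\ket{\phi}} \bra{\phi} Q\ket{\phi}$ over unit vectors $\ket{\phi}$ in $\Y\otimes\Z\otimes\X$. Since $\X$ here encodes the (classical) pair $(c_1,c_2)$ together with the state label, $Q$ is block diagonal in $(c_1,c_2)$, so $\|Q\|$ is the max over the four blocks. The $c_1=c_2$ blocks and the mixed blocks give different values, and one then maximizes over a test vector on $\Y\otimes\Z$. The heart of the computation is bounding, for the ``$c_1=c_2=c$'' block, the quantity $\tfrac{1}{2d}\sum_{t,b}\,\big|\langle a_1| e_t^b\rangle\big|^2\big|\langle a_2|e_t^b\rangle\big|^2$-type expressions appearing when one expands in the two bases; the cross terms between the $b=0$ and $b=1$ halves of the ensemble are precisely where the effective overlap $c=\max_{s,t}|\langle e_s^0|e_t^1\rangle|^2$ enters, via Cauchy--Schwarz. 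Collecting terms should yield $\|Q\|=\tfrac1d\big(\tfrac34+\tfrac{\sqrt c}{4}\big)$, so that $\tr(Y)=d\|Q\|=\tfrac34+\tfrac{\sqrt c}{4}$; feasibility $\I_{\Y\otimes\Z}\otimes Y\geq Q$ is then immediate from $\|Q\|\I\geq Q$. Weak duality, together with the parallel repetition argument of Section~\ref{sec:par-rep} (which applies because $Q\geq 0$ and is built as a tensor product across the $n$ copies), upgrades the single-copy bound $\tfrac34+\tfrac{\sqrt c}4$ to $\big(\tfrac34+\tfrac{\sqrt c}4\big)^n$ for the $n$-qudit scheme.

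For the second assertion, that when $d=2$ the bound is achieved, I would exhibit a matching primal feasible solution $X=J(\Phi)$ for an explicit counterfeiting channel $\Phi$ in the single-qubit case, in the same spirit as the explicit Kraus operators $A_0,A_1$ displayed earlier for Wiesner's scheme; one then checks that its objective value equals $\tfrac34+\tfrac{\sqrt c}4$ (with $c=\tfrac12$ for the original Hadamard-basis scheme this is $\tfrac34+\tfrac{\sqrt2}8$). By strong duality (Slater's theorem, as already noted for this SDP family) this primal value certifies optimality. The parallel-repetition construction $X=W_\pi(X_1\otimes\cdots\otimes X_n)W_\pi^{\ast}$ then gives an optimal $n$-qudit attacker, completing the lemma and hence Theorem~\ref{thm:main-class}.

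I expect the main obstacle to be the norm computation $\|Q\|$: one must correctly organize the average over the $2d$ states and the four challenge patterns, identify which test vector on $\Y\otimes\Z$ is extremal, and verify that the optimal test vector makes the two-basis cross terms saturate exactly at $\sqrt c$ rather than something weaker. A secondary subtlety is checking that restricting to the ``$c_1=c_2$'' block really does dominate the mixed blocks, so that the claimed value of $\|Q\|$ is the true maximum and not merely an upper bound; getting this wrong would give a non-tight constant. The product-property step and the $d=2$ primal construction are comparatively routine given the machinery already set up in Sections~\ref{sec:counterfeit-sdp} and~\ref{sec:par-rep}.
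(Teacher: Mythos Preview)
There is a genuine gap in your plan. Your dual choice $Y=\|Q\|\,\I_{\X}$ over the \emph{full} input space (which has dimension $4d$, since $\X$ carries the two challenge bits as well as the qudit) will only give the trivial bound $1$. The reason is that the four $(c_1,c_2)$ blocks of $Q$ do \emph{not} have the same operator norm. For the matched block $c_1=c_2=0$ one finds
\[
Q_{00}=\frac{1}{8d}\Bigl(\sum_t \ket{t,t}\bra{t,t}\otimes\ket{e_t^0}\bra{e_t^0}+\I\Bigr),
\qquad \|Q_{00}\|=\frac{1}{4d},
\]
which does not depend on the overlap at all; whereas for the mixed block $c_1=0,c_2=1$,
\[
Q_{01}=\frac{1}{8d}\sum_{s,t}\ket{s}\bra{s}\otimes\ket{t}\bra{t}\otimes V_{s,t},
\qquad \|Q_{01}\|=\frac{1+\sqrt{c}}{8d},
\]
with $V_{s,t}=\ket{e_s^0}\bra{e_s^0}+\ket{e_t^1}\bra{e_t^1}$. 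Since $c\le 1$, the matched blocks dominate, so $\|Q\|=\tfrac{1}{4d}$ and $\tr(\|Q\|\I_\X)=4d\cdot\tfrac{1}{4d}=1$. Your identification of the ``hard'' block is therefore inverted: the overlap $c$ enters only in the $c_1\neq c_2$ blocks, through the eigenvalues $1\pm|\langle e_s^0|e_t^1\rangle|$ of the rank-$2$ operators $V_{s,t}$, not via any Cauchy--Schwarz cross term in the matched blocks. Your claimed value $\|Q\|=\tfrac1d(\tfrac34+\tfrac{\sqrt{c}}4)$ cannot arise, and the side computation $\tr(Y)=d\|Q\|$ also uses the wrong dimension.

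The fix, which is what the paper does, is to treat the blocks separately rather than use a single scalar multiple of the identity. The matched-challenge blocks are handled \emph{directly} (the counterfeiter wins with probability $1$ there, contributing $\tfrac12\cdot 1$), and only the mixed blocks require a nontrivial dual: there one takes $Y_{c_1c_2}=\tfrac{1+\sqrt{c}}{8d}\,\I_{\complex^d}$, feasible because $\|V_{s,t}\|\le 1+\sqrt{c}$, yielding $(1+\sqrt{c})/2$ conditional on $c_1\neq c_2$. Equivalently, one may keep a single SDP but choose $Y$ block-diagonal in $(c_1,c_2)$ with different scalars on the matched and mixed blocks; either way the total is $\tfrac12+\tfrac12\cdot\tfrac{1+\sqrt{c}}{2}=\tfrac34+\tfrac{\sqrt{c}}{4}$. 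Once you reorganize the argument this way the parallel-repetition step and the $d=2$ primal construction go through as you describe.
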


\begin{proof}
We first analyze simple counterfeiting attacks against the basic
single-qudit scheme. 
Note that if both challenges from the bank are identical, the
counterfeiter can answer both correctly with probability $1$ by making
the appropriate measurement on his qubit. 

By symmetry, it suffices to consider the case where the first
challenge is $c_1 = 0$ and the second is $c_2 = 1$. 
In this case the operator $Q$ becomes
\begin{align*}
 Q &= \frac{1}{2d}\sum_{s,t=0}^{d-1}  
 \ket{s}\bra{s}_{\Y} \otimes \ket{t}\bra{t}_{\Z}
 \otimes
 \big(\ket{e_s^0}\bra{e_s^0}_\X + |e_t^1\rangle\langle e_t^1|_\X\big). 
\end{align*}
For $s,t\in\{0,\ldots,d-1\}$, let $V_{s,t} =
\ket{e_s^0}\bra{e_s^0}_\X  + \ket{e_{t}^1}\bra{e_{t}^1}_\X $. As $Q$ is block-diagonal, the dual SDP is
    \begin{align}
      \text{minimize:}\quad & \tr\big(Y\big)\notag\\
      \text{subject to:}\quad & Y \geq \frac{1}{2d} V_{s,t}
      \quad\text{(for all $s,t$)}
      \label{eq:sdp-class-1}\\
      & Y\in\herm{\complex^d}.\notag
    \end{align}
$V_{s,t}$ is a rank-$2$ Hermitian matrix whose eigenvalues are
    $1\pm\big|\langle e_s^0|e_{t}^1 \rangle\big|$. 
Hence, $Y = \frac{1+\sqrt{c}}{2d}\,\I$ is a feasible solution to the
dual problem with objective value $(1+\sqrt{c})/2$, leading to an
upper bound on the best counterfeiting strategy with overall success
probability at most $3/4+\sqrt{c}/4$. 

To finish the proof of the upper bound it suffices to note that the
SDP has the same parallel repetition property as was described in
Section~\ref{sec:par-rep}.

Finally, we show the ``moreover'' part of the claim.
Relabeling the vectors if necessary, assume
$|\langle e_0^0 | e_0^1\rangle|=\sqrt{c}$.
Let $\ket{u_0}$ be the eigenvector of $V_{0,0}$ with largest eigenvalue
$1+\sqrt{c}$, and $\ket{u_1}$ the eigenvector with smallest
eigenvalue. 
Using the observation that
$|\langle e_1^0 | e_1^1\rangle|=\sqrt{c}$, it
may be checked that
$$X =  \ket{0,0}\bra{0,0}\otimes \ket{u_0}\bra{u_0} +
\ket{1,1}\bra{1,1}\otimes \ket{u_1}\bra{u_1}$$
is a feasible solution to the primal SDP corresponding
to~\eqref{eq:sdp-class-1} (as expressed in
Section~\ref{sec:counterfeit-sdp}) with objective value
$(1+\sqrt{c})/2$, proving that the optimum of~\eqref{eq:sdp-class-1}
is exactly $(1+\sqrt{c})/2$.
\end{proof}

\subsection{A matching lower bound}

Let $d$ be a fixed dimension. 
We introduce a quantum ticket scheme for which the upper bound derived in the
previous section is tight. 
For $d=2$ our scheme recovers the one that is derived from Wiesner's
original quantum money.
Let $X_d$ and $Z_d$ be the generalized Pauli matrices, acting as 
$$X_d:\,\ket{i}\to\ket{i+1\mod d}\qquad\text{and}\qquad Z_d:\,\ket{i}\to \omega^{i}\ket{i},$$
 where 
$\omega = e^{2i\pi/d}$. 
Let $F$ be the quantum Fourier transform over $\integer_d$, 
$$F:\,\ket{i} \to \frac{1}{\sqrt{d}} \sum_j \omega^{ij} \ket{j},$$
and note that $X_d = FZ_d F^\dagger$. 
Let $\left\{\ket{e_t^0}\right\}$ be the basis defined by 
$\ket{e_t^0} = (X_d)^t\ket{0} = \ket{t}$, and $\{\ket{e_t^1}\}$
the Fourier-transformed basis $\ket{e_t^1} = F\ket{e_t^0} = (Z_d)^t
F\ket{0}$ for every $t$. 
Then 
$$\big|\langle e_s^0 | e_t^1\rangle\big| 
\,=\,\big|\langle s | F | t\rangle \big|\,=\,\frac{1}{\sqrt{d}}$$
 for every $s,t$: 
the corresponding overlap is $c=1/d$. 
Lemma~\ref{lem:optdual-class} shows that the optimal cloner achieves
success at most $3/4+1/(4\sqrt{d})$. The following lemma states a
matching lower bound.

\begin{lemma}\label{lem:class-lb} There is a cloner for the $n$-qudit
  ticket scheme described above which successfully answers both
  challenges with success probability
  $\big(\frac{3}{4}+\frac{1}{4\sqrt{d}}\big)^n$.
\end{lemma}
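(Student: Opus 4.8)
The plan is to construct an explicit cloner and verify, using the parallel‑repetition structure from Section~\ref{sec:par-rep}, that it suffices to exhibit a single‑qudit cloner achieving success probability exactly $\tfrac{3}{4}+\tfrac{1}{4\sqrt d}$ against the basic scheme; the $n$‑qudit bound then follows by taking the $n$‑fold tensor power. Concretely, because the SDP describing the optimal simple counterfeiting attack has the product property discussed earlier, a primal solution of the form $X = W_\pi(X_1\otimes\cdots\otimes X_n)W_\pi^{\ast}$, with each $X_j$ an optimal single‑repetition primal solution, achieves value $\alpha^n$; so the whole task reduces to the $d$‑dimensional single‑qudit case with overlap $c = 1/d$.

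For the single‑qudit construction I would recycle the explicit primal solution appearing at the end of the proof of Lemma~\ref{lem:optdual-class}, specialized to the bases $\{\ket{e_t^0}=\ket t\}$ and $\{\ket{e_t^1}=F\ket t\}$. Recall that when the two challenges $(c_1,c_2)$ are equal the counterfeiter answers perfectly (measure in the common basis), contributing probability $1$ from those $2$ of the $4$ challenge pairs. When $c_1\neq c_2$, the relevant operator is block‑diagonal over the answer labels $(s,t)$ and the primal problem over each block reduces to maximizing $\tfrac{1}{2d}\langle u\vert V_{s,t}\vert u\rangle$ over unit vectors $\ket u$, where $V_{s,t}=\ket{e_s^0}\bra{e_s^0}+\ket{e_t^1}\bra{e_t^1}$ has top eigenvalue $1+\sqrt c$. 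Since here $\abs{\langle e_s^0\vert e_t^1\rangle}=1/\sqrt d$ is the same for every pair $(s,t)$, I would set $X=\sum_{s}\ket{s,s}\bra{s,s}\otimes\ket{u_{s}}\bra{u_{s}}$, with $\ket{u_s}$ the top eigenvector of $V_{s,s}$, so the constraint $\tr_{\Y\otimes\Z}(X)=\I_{\X}$ holds (the reduced operator is $\sum_s\ket{u_s}\bra{u_s}$, which one checks equals $\I$ by a short computation exploiting the Fourier structure, or by instead choosing $\ket{u_s}$ supported appropriately so that completeness is manifest). This gives value $(1+\sqrt c)/2 = \tfrac12+\tfrac{1}{2\sqrt d}$ in the unequal‑challenge sector, hence overall $\tfrac12\cdot 1+\tfrac12\cdot\big(\tfrac12+\tfrac{1}{2\sqrt d}\big)=\tfrac34+\tfrac{1}{4\sqrt d}$.

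The main obstacle is verifying the trace‑preservation (completeness) constraint for the single‑qudit primal solution: one needs the family of eigenvectors $\{\ket{u_s}\}$ — or whatever the chosen rank‑one pieces are — to sum to the identity across \emph{both} challenge‑orderings $(0,1)$ and $(1,0)$ simultaneously, and these two orderings impose different‑looking constraints on $\X$. I expect this to work out thanks to the high symmetry of the Pauli/Fourier pair (the $V_{s,t}$ are all unitarily conjugate via $X_d^aZ_d^b$), which makes the eigenvector family itself covariant under these symmetries, forcing $\sum_s\ket{u_s}\bra{u_s}$ to be a scalar multiple of $\I$; but one must also confirm the scalar is exactly $1$, i.e. that one has not accidentally built a subnormalized or supernormalized channel. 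An alternative route that sidesteps explicit eigenvector bookkeeping is to realize the cloner physically: for each qudit, the honest verification in either basis is a measurement of a mutually unbiased pair, and the optimal two‑outcome strategy for $c_1\neq c_2$ is to measure in an intermediate (Bloch‑sphere bisecting, for $d=2$) basis — for general $d$ one uses the measurement whose outcome vectors are the top eigenvectors of the $V_{s,t}$ — and then report correlated guesses. Writing the resulting channel out and plugging into~\eqref{eq:probability-to-counterfeit} gives the claimed value directly, with trace‑preservation automatic because it is a genuine channel. I would present whichever of these two is cleaner, with the physical description as the primary argument and the SDP primal solution as the formal certificate.
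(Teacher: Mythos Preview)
Your primary construction---taking $X=\sum_{s}\ket{s,s}\bra{s,s}\otimes\ket{u_{s}}\bra{u_{s}}$ with $\ket{u_s}$ the top eigenvector of $V_{s,s}$---does not give a feasible primal solution for $d>2$. The obstacle you flag as ``expect it to work out'' in fact fails: the symmetry group that preserves the diagonal family $\{V_{s,s}\}_s$ is only the cyclic subgroup generated by $X_dZ_d$ (since $X_d^aZ_d^b V_{s,t}(X_d^aZ_d^b)^\ast=V_{s+a,t+b}$), and this abelian group does not act irreducibly on $\complex^d$, so Schur's lemma does not force $\sum_s\ket{u_s}\bra{u_s}$ to be a scalar. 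A direct computation for $d=3$ shows the off-diagonal entries of $\sum_s\ket{u_s}\bra{u_s}$ are nonzero; more structurally, $d$ rank-one projectors can sum to $\I_d$ only if the underlying vectors are orthonormal, and here $\langle u_s|u_{s'}\rangle\neq 0$ in general. The coincidence that makes Lemma~\ref{lem:optdual-class}'s primal work for $d=2$ is that there $V_{0,0}+V_{1,1}=2\I$, so the bottom eigenvector of $V_{0,0}$ is simultaneously the top eigenvector of $V_{1,1}$; this identity has no analogue for $d>2$.

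Your alternative route---building a genuine measurement out of the top eigenvectors of \emph{all} the $V_{s,t}$---is exactly what the paper does, and is the correct fix. The paper sets $\ket{\psi}\propto\ket{0}+F\ket{0}$ and takes $P_{s,t}$ to be the rank-one projector onto $X_d^sZ_d^t\ket{\psi}$, which (one checks) is precisely the top eigenvector of $V_{s,t}$. Now the full Heisenberg--Weyl group $\{X_d^sZ_d^t\}$ acts irreducibly, so Schur's lemma gives $\sum_{s,t}\tfrac{1}{d}P_{s,t}=\I$ and we have a legitimate $d^2$-outcome POVM. On outcome $(s,t)$ the cloner answers $s$ to challenge~$0$ and $t$ to challenge~$1$; the success probability in the unequal-challenge sector then reduces to $|\langle 0|\psi\rangle|^2=\tfrac12(1+\tfrac{1}{\sqrt d})$. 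To complete your proof you should abandon the $d$-vector ansatz and carry this $d^2$-outcome construction through explicitly.
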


\begin{proof}
We describe a cloner that acts independently on each qudit, succeeding
with probability $\frac{3}{4}+\frac{1}{4\sqrt{d}}$ on each 
qudit.\footnote{%
  The analysis is very similar to one that was done in~\cite{VW11}, in
  a different context but for essentially the same problem.}
Let 
$$\ket{\psi} \,=\, \big(2+2/\sqrt{d}\big)^{-1/2}(\ket{0}+F\ket{0}),$$
and for every $(s,t)$ let $P_{s,t}$ be the rank 1 projector on the
unit vector $X_d^s Z_d^t\ket{\psi}$. 
As a consequence of Schur's lemma, 
$\sum_{s,t} \frac{1}{d}P_{s,t} = \I$, so that $\big\{P_{s,t}/d\big\}$
is a POVM.

The cloner proceeds as follows: if the challenge is either $00$ or
$11$, he measures in the corresponding basis and sends the resulting
outcome as answer to both challenges. In this case he is always
correct. In case the challenge is either $01$ or $10$, he measures the
ticket using the POVM $\{P_{s,t}/d\}$, and uses $s$ as answer to the
challenge ``$0$'' and $t$ as answer to the challenge ``$1$''. 
Because the two challenges are distinct, only one of them corresponds
to the actual basis in which the ticket was encoded. 
Without loss of generality assume this is the ``$0$'' basis, so that
the ticket is $\ket{e_{s}^0} = \ket{s}$.
The probability that the cloner obtains the correct outcome $s$ is
\begin{align*}
 \frac{1}{d}\sum_t \tr\big(P_{s,t} \ket{s}\bra{s} \big) 
 & = \frac{1}{d}\sum_t \big|\bra{s} X_d^s Z_d^t\ket{\psi}\big|^2\\
 & = \frac{1}{d}\sum_t \big|\bra{0} Z_d^t\ket{\psi}\big|^2\\
 & = \big|\langle 0 | \psi\rangle\big|^2,
\end{align*}
because, for every $t$, it holds that $\bra{0}Z_d^t =\omega^t\bra{0}$. 
To conclude, it suffices to compute
\[
\big|
\langle 0 | \psi \rangle \big|^2 
\,=\, \frac{1}{2+2/\sqrt{d}}\big|
\langle 0 | 0\rangle + \langle 0 | F | 0 \rangle
\big|^2\,=\,\frac{1}{2}\Big(1+\frac{1}{\sqrt{d}}\Big).
\]
\end{proof}

\paragraph{Acknowledgments}

We thank Scott Aaronson for his question\footnote{%
  \url{http://theoreticalphysics.stackexchange.com/questions/370/}} 
on Theoretical Physics Stack Exchange that originated the results in
this paper as an answer, and Peter Shor for pointing out
the connection between the channel representing an optimal
attack on Wiesner's quantum money, and the optimal cloners studied in
\cite{BruCDM00} and \cite{BuzekH96}. 
JW thanks Debbie Leung and Joseph Emerson for helpful discussions.
AM thanks Michael Grant and Stephen Boyd for their creation of the CVX software.
\bibliographystyle{alpha}
\bibliography{Wiesner}

\end{document}